\title{Temporal Connectivity Augmentation} 
\author{Thomas Bellitto}{Sorbonne University, CNRS, LIP6, F-75005 Paris, France}{thomas.bellitto@lip6.fr}{}{}
\author{Jules Bouton Popper}{Sorbonne University, CNRS, LIP6, F-75005 Paris, France}{jules.bouton-popper@lip6.fr}{}{}
\author{Bruno Escoffier}{Sorbonne University, CNRS, LIP6, F-75005 Paris, France}{bruno.escoffier@lip6.fr}{}{}
\authorrunning{T. Bellitto, J. Bouton Popper and B. Escoffier} 
\keywords{Temporal graph, temporal connectivity.} 
\newcommand*\itemif{\item[$\Rightarrow$]}
\newcommand*\itemonlyif{\item[$\Leftarrow$]}
\newcommand{\dintervals}[2]{\{#1,\dotsc,#2\}}
\DeclareMathOperator*{\argmin}{arg\,min}
\newtheorem*{theorem*}{Theorem}
\newtheorem*{propertyP}{Property $\mathcal{P}$}
\begin{document}

\maketitle

\begin{abstract}
    Connectivity in temporal graphs relies on the notion of  temporal paths, in which edges follow a chronological order (either strict or non-strict). In this work, we investigate the question of how to make a temporal graph connected. More precisely, we tackle the problem of finding, among a set of proposed temporal edges, the smallest subset such that its addition makes the graph temporally connected (TCA). We study the complexity of this problem and variants, under restricted lifespan of the graph, i.e. the maximum time step in the graph. Our main result on TCA is that for any fixed lifespan at least 2, it is NP-complete in both the strict and non-strict setting. We additionally provide a set of restrictions in the non-strict setting which makes the problem solvable in polynomial time and design an algorithm achieving this complexity. Interestingly, we prove that the source variant (making a given vertex a source in the augmented graph) is as difficult as TCA. On the opposite, we prove that the version where a list of connectivity demands has to be satisfied is solvable in polynomial time, when the size of the list is fixed. Finally, we highlight a variant of the previous case for which even with two pairs the problem is already NP-hard.
\end{abstract}

\section{Introduction}

    Temporal graphs are a generalization of graphs, where edges between two nodes can only be used at given time steps. Adding a notion of temporality on the edges makes temporal graphs a powerful tool to study dynamic systems. For example, temporality makes them much more suitable to model public transport networks, where trains between two cities only run at given times, or to study the spread of information or an epidemic.
    
    Generalizing notions of graph theory to temporal graphs is an important and active topic but also a challenging one. Indeed, we lose very fundamental properties of accessibility in graph such as symmetry and transitivity: if there is an edge between $u$ and $v$ at time 1 and one between $v$ and $w$ at time 2, it is possible to go from $u$ to $w$ but not from $w$ to $u$ even though one can go from $w$ to $v$ and from $v$ to $u$. As such, notions such as connected components or spanning trees do not translate well to temporal graphs.
    
    The main topic of this paper is the question of how to (re)connect a temporal graph. Spanners are a well-studied topic in temporal graph theory \cite{kempe2000connectivity, axiotis2016size, akrida2017complexity, casteigts2021temporal, casteigts2024sharp, angrick2024towards} and can be seen as a generalization of spanning trees. The problem of minimal spanner is, given a connected graph, to remove as many edges as possible without losing the connectivity of the graph. Here, we take a converse approach and start from a graph which is not connected and look for the smallest set of temporal edges to add to make the graph connected. For example, if we want to strengthen a transportation network, which line would be most beneficial for the connectivity of the network. An important variant is to pick the edges we add within a given set of possible edges. For example, if a network is damaged, a solution to this problem would highlight which edges should be repaired first to make the network connected again as quickly as possible.

    Using the model of temporal graphs introduced by Kempe et al. \cite{kempe2000connectivity} where time is discretized into time steps and the edge set can evolve from a time step to the next one, we formalize our problem as follows: given a temporal graph $\mathcal{G}$ and a set $F$ of temporal edges (not in  $\mathcal{G}$), we want to determine a set $F'\subseteq F$ of minimal size such that adding $F'$ to $\mathcal{G}$ restore some connectivity requirements (see Section~\ref{sec:def} for formal definitions).
    
        \subparagraph*{Our contribution.} Our work focuses on the computational complexity of the problems.
        We look at three different connectivity requirements: in the first one, we want to make the graph temporally connected by addition of temporal edges. We denote the problem by TCA, for temporal connectivity augmentation. In the second one, we fix a specific vertex $v$ and we want it to be a source of the temporal graph (we call this problem TSA, for temporal source augmentation). In the last one, we are given a set of pairs $(u_i,v_i)$ of vertices, and we want that a path exists from $u_i$ to $v_i$ for all $i$ (TPCA, temporal pair connectivity augmentation).  
        
        We mainly show that TCA and TSA are NP-hard even when there are only 2 time steps, while TPCA is polynomially solvable when the number of pairs is fixed. We refine the hardness results by showing that  TCA and TSA remain hard under other restrictions. We also provide some additional complementary results, mainly a specific case where TCA is polynomially solvable, and a natural variation of TPCA which turns out to be NP-hard even with a fixed number of pairs. 
        
        Note that most of these results hold both in the strict and in the non-strict settings, corresponding to the case where temporal paths are required to be strict (involving temporal edges with strictly increasing time) or not.\\
        
        The article is organized as follows. Formal definitions and preliminaries are given in Section~\ref{sec:def}. Section~\ref{sec:tca} present results on TCA and TSA, while Section~\ref{sec:tpca} deals with TPCA.

    \subparagraph*{Further related work.} On the topic of temporal connectivity, introduced by Kempe et al. \cite{kempe2000connectivity}, the work of Bhadra and Ferreira \cite{bhadra2003complexity} focuses on extending the definition of connected components to temporal graphs.
    Furthering the study of connected components, Jarry and Lotker \cite{jarry2004connectivity} focus on temporal graphs with geometric properties.
    Connectivity was also investigated in the time windows setting, for example by Kuhn et al. \cite{kuhn2010distributed}. In this work, the graph has to be connected for every time interval of length $\delta$. A connectivity problem studied by Michael et al. \cite{michael2009maintaining} resembles what our work focuses on. They have a swarm of robots and ask that over time, they maintain connectivity by distributed protocols. On the other side, our approach is centralized, does not start with a connected graph necessarily and only aims at connectivity on the full lifespan of the graph. A framework of analysis has been proposed by Casteigs et al. \cite{casteigts2024simple}, establishing relevant restrictions to the graph when studying connectivity and a hierarchy between them for complexity results.
    
    As mentioned earlier, another area of interest in temporal graphs is the study of spanners, i.e. subgraphs that preserve connectivity. It is first introduced by Kempe et al. \cite{kempe2000connectivity}, as an open question about finding a sparse subset of edges in an already connected graph. A negative answer was given by Axiotis and Fotakis \cite{axiotis2016size}, showing that there are minimal connected temporal graphs with $\Theta(n^2)$ edges. Additionally, they investigate weaker connectivity requirements, as the total connectivity is sometimes not relevant in practical applications. Our work also considers such generalizations. 
    
    In \cite{akrida2017complexity}, Akrida et al. restricted the problem to temporal cliques, i.e. temporal graphs with a complete underlying graph, with unique time labels, but did not improve the asymptotic density of spanners. Furthering the work, in 2021, Casteigts et al. \cite{casteigts2021temporal} proved that temporal cliques admit sparse spanners with $O(n \log n)$ edges, and designed an algorithm to find such a spanner. The question of whether a linear spanner exists in all temporal cliques remains open. A study of the problem on random temporal graphs is done by Casteigts et al. in \cite{casteigts2024sharp} to establish sharp thresholds on connectivity properties and the implications for the existence of different kinds of sparse spanners. In \cite{angrick2024towards} Angrick et al. make a considerable step forward by proving the existence of linear spanners in some families of temporal cliques, leaving very constrained cases where the question is still open. Other research on spanners include robustness \cite{bilo2022blackout}, or inefficient networks \cite{christiann2023inefficiently}.

\section{Preliminaries and notation}\label{sec:def}
    This section introduces important definitions and makes some basic observations that contextualize our problem.

    \begin{definition}[temporal graph]
        A temporal graph is a pair $(G,\lambda)$ where $G=(V,E)$ is a (static) graph and $\lambda:E\rightarrow 2^{\mathbb{N}}$. $\lambda(e)$ represents the set of time steps where edge $e$ is present. The lifespan of the graph is $T=\max_{e\in E}\max \lambda(e)$.  $(G,\lambda)$ is {\it simple} if $|\lambda(e)|=1$ for all $e\in E$. 
    \end{definition}    
    A temporal graph is also classically defined using temporal edges. 
    A temporal edge is a pair $(e,t)$ with $e\in \binom{V}{2}$ and $t\in \mathbb{N}^*$. Then the temporal edges of $(G,\lambda)$ are the temporal edges $(e,t)$ where $e\in E$ and $t\in \lambda(e)$.
    
    The snapshot of $(G,\lambda)$ at time $t$ is the (static) graph $G_t=(V,E_t)$ where $E_t$ is the set of edges $e\in E$ with $t\in \lambda(e)$ (i.e., temporal edges at time $t$). We will call {\it snapshot components} at time $t$ the set of connected components of the snapshot at time $t$.
    
    \begin{definition}[journey]
        A \textit{journey} (or \textit{temporal path}) of a temporal graph $\mathcal{G}$ is a sequence  $(u_0,u_1,t_0),(u_1,u_2,t_1),\dotsc,(u_{k-1},u_k,t_{k-1})$ where:
        \begin{itemize}
            \item $(\{u_i,u_{i+1}\},t_i)$ is a temporal edge of $\mathcal{G}$;
            \item $t_i \leq t_{i+1}$ (resp. $t_i < t_{i+1}$) if the journey is \textit{non-strict} (resp. \textit{strict}).
        \end{itemize}
    \end{definition}
    \begin{definition}[temporal connectivity]
        A vertex $v$ is \textit{reachable} from $u$ if there is a journey from $u$ to $v$ (denoted $u\rightarrow v$). If every vertex $v$ is reachable from every other vertex $u$ in $\mathcal{G}$, then $\mathcal{G}$ is \textit{temporally connected}. If all journeys are strict, then the graph is \textit{strictly connected}.
    \end{definition}

    \begin{definition}[augmentation]
        Let $\mathcal G$ be a temporal graph, and $F$ be a set of temporal edges. The {\it augmentation} of $\mathcal G$ by $F$ is the temporal graph  denoted $\mathcal G \uparrow F$ obtained from  $\mathcal G$ by adding the temporal edges in $F$.
        We say that $F$ is a \textit{(strictly) connecting set} for $\mathcal G$ if $\mathcal G \uparrow F$ is (strictly) connected.
    \end{definition}
    
    Now we can formally define the main problems under consideration.

    \begin{quote}
        \textsc{Temporal Connectivity Augmentation} (TCA)\\
        \textbf{Input:} A temporal graph $\mathcal{G}$, a set $F$ of temporal edges and an integer $K$.\\
        \textbf{Question:} Is there set $F'\subseteq F$ of size at most $K$ such that $\mathcal{G}\uparrow F'$ is temporally connected?
    \end{quote}

    As mentioned in introduction, we also consider other connectivity requirements. More precisely, we consider:
    \begin{itemize}
        \item \textsc{Temporal Source Augmentation} (TSA) where, given a specific vertex $v$, we want to make $v$ a source (by augmentation).
        \item \textsc{Temporal Pairs Connectivity Augmentation} (TPCA), which takes as additional input a list of pairs of vertices that need to be connected in the augmented graph. Formally:
    \begin{quote}
        \textsc{Temporal Pairs Connectivity Augmentation} (TPCA)\\
        \textbf{Input:} A temporal graph $\mathcal{G}$, a set $F$ of temporal edges, a set $X = \{(u_1,v_1),\dotsc,(u_p,v_p)\}$ of $p$ connectivity demands and an integer $K$.\\
        \textbf{Question:} Is there a set $F'\subseteq F$ of size at most $K$ such that in $\mathcal{G}\uparrow F'$ there exists a journey from $u_i$ to $v_i$ (for $i=1,\dots,p$)?
    \end{quote}
    \end{itemize}
    
    TCA, TSA and TPCA are defined using non strict journeys. We consider also the strict versions, with strict connectivity requirements, namely Strict TCA, Strict TSA and Strict TPCA. To avoid some possible confusion, we will sometimes explicitely precise that TCA, TSA, TPCA refer to the non strict version (and then write Non Strict TCA,\dots). \\
    
    We will study the complexity of these problems under several possible restrictions. We will consider the {\it simple case} when $\mathcal{G} \uparrow F$ is simple, the case where the lifespan of $\mathcal{G} \uparrow F$ is bounded by $k$ (denoted $k$-TCA, $k$-TSA, ...), and the {\it unrestricted case} where any temporal edge can be added (on the existing set of vertices and within the lifespan of the graph), i.e., $F=\{(\{u,v\},t): u,v\in V, 1\leq t\leq T\}$.

    To conclude this section we  formally state the connection between spanners and TCA mentionned in introduction. The temporal spanner problem asks, given a connected temporal graph and an integer $K$, if we can maintain connectivity by keeping only $K$ temporal edges of the temporal graph.
    \begin{proposition}
        There is a polynomial-time reduction from the temporal spanner problem to the temporal connectivity augmentation problem.
    \end{proposition}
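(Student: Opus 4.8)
The plan is to exhibit a direct reduction that turns a spanner instance into a TCA instance by placing all temporal edges of the input graph into the candidate set $F$ and starting from an edgeless base graph. Intuitively, selecting a sparse connected subset of an already connected graph is the same task as selecting a small connecting subset of added edges when the base graph contributes nothing.

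Given a temporal spanner instance consisting of a connected temporal graph $\mathcal{H}$ on a vertex set $V$ with temporal edge set $E_{\mathcal{H}}$, together with an integer $K$, I would construct the TCA instance $(\mathcal{G},F,K')$ as follows: let $\mathcal{G}$ be the temporal graph on the same vertex set $V$ with no edges, let $F = E_{\mathcal{H}}$ be the full set of temporal edges of $\mathcal{H}$, and set $K' = K$. This is computable in time linear in the size of $\mathcal{H}$, hence polynomial. Since $\mathcal{G}$ has no edges, every edge of $F$ is absent from $\mathcal{G}$, which respects the requirement in the definition of TCA that $F$ consists of temporal edges not in $\mathcal{G}$.

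The correctness argument rests on a single observation: for any $F' \subseteq E_{\mathcal{H}}$, the augmented graph $\mathcal{G} \uparrow F'$ is exactly the temporal subgraph of $\mathcal{H}$ whose temporal edge set is $F'$. Consequently $\mathcal{G} \uparrow F'$ is (strictly) temporally connected if and only if $F'$ is a (strict) spanner of $\mathcal{H}$. Therefore there exists $F' \subseteq F$ with $|F'| \le K' = K$ making $\mathcal{G} \uparrow F'$ connected if and only if $\mathcal{H}$ admits a spanner of at most $K$ temporal edges. Because the journeys available in $\mathcal{G}\uparrow F'$ coincide exactly with those of the corresponding subgraph of $\mathcal{H}$ under either reachability notion, the equivalence holds simultaneously in the strict and the non-strict settings.

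There is no genuine obstacle here; the reduction is immediate once the base graph is taken to be edgeless. The only mild technical point worth addressing is the degenerate lifespan of the edgeless $\mathcal{G}$: since connectivity is defined purely through the existence of journeys, and every journey of $\mathcal{G} \uparrow F'$ is a journey inside $\mathcal{H}$, the lifespan of $\mathcal{G}\uparrow F'$ simply equals that of $\mathcal{H}$ and raises no difficulty. If one wishes to avoid an empty edge set altogether, one may equivalently describe $\mathcal{G}$ as the vertex set $V$ alone, inheriting its lifespan from $F$.
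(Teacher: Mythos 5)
Your reduction is exactly the one in the paper: start from an edgeless base graph on the same vertex set, take $F$ to be all temporal edges of the spanner instance, and keep the same budget $K$, so that augmenting sets coincide with spanners. The argument is correct and matches the paper's proof, with your version simply spelling out the equivalence in slightly more detail.
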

    \begin{proof}
        Let $I=(\mathcal{G},K)$ be an instance of the temporal spanner problem. We consider a temporal graph $\mathcal{G}'$ on the same vertex set as $\mathcal{G}$, with no temporal edge. We consider as set $F$ (for the augmentation in TCA) all the temporal edges of $\mathcal{G}$. Then, making $\mathcal{G}'$ connected by adding at most $K$ edges of $F$ is equivalent to having a spanner of size at most $K$ in $\mathcal{G}$.
    \end{proof}

\section{Temporal connectivity augmentation}\label{sec:tca}
    
    \subsection{Strict TCA}
        The main result of this section is the following theorem, which states that TCA in the strict setting is NP-complete even for temporal graphs of lifespan 2. This remains true even if the temporal graph is simple, and in the unrestricted case. We also show the same results for the source version Strict TSA.
        
        \begin{theorem}\label{thm:2-STCA}
            Strict 2-TCA is NP-complete, even on simple graphs and even in the unrestricted case.
        \end{theorem}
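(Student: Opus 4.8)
The plan is to show membership in NP and then NP-hardness by a reduction from \textsc{Set Cover}. Membership is immediate: a subset $F'\subseteq F$ with $|F'|\le K$ is a polynomial certificate, and since the lifespan is $2$ a strict journey consists of at most two temporal edges, so strict connectivity is checked by testing, for each ordered pair $(u,v)$, the existence of a direct edge or of an intermediate vertex $w$ with edges $(\{u,w\},1)$ and $(\{w,v\},2)$. This same observation---that the only mechanism of reachability is ``receive at time $1$, forward at time $2$''---is what I would exploit to embed a covering problem.

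From a \textsc{Set Cover} instance with universe $\{1,\dots,n\}$ and sets $S_1,\dots,S_m$ I would build $\mathcal{G}$ on a source $a$, a hub $h$, a vertex $f_j$ per set and a vertex $e_i$ per element, with fixed edges arranged so that (i) via $h$ every vertex reaches every non-element (edges $(\{u,h\},1)$ for all $u$ and $(\{h,v\},2)$ for $v$ not an element); (ii) every non-source vertex reaches every element $e_i$ through some $f_j$ with $i\in S_j$ (fixed edges $(\{u,f_j\},1)$ for non-source $u$, together with fixed edges $(\{f_j,e_i\},2)$); while (iii) $a$ reaches $h$, all sets and all non-elements for free but has \emph{no} fixed journey to any element, since $a\to_1 h$ cannot be forwarded to an element and $a$ only ever reaches the $f_j$ at time $2$. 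Then all pairs are already connected except the pairs $(a,e_i)$, and the only useful augmenting edges are the $(\{a,f_j\},1)$: adding one creates exactly the journeys $a\to_1 f_j\to_2 e_i$ for $i\in S_j$. Hence $\mathcal{G}\uparrow F'$ is strictly connected iff the selected set-vertices cover the universe, so the minimum number of added edges equals the minimum cover. Taking $F=\{(\{a,f_j\},1):1\le j\le m\}$ settles the basic case, and since the single binding requirement is ``$a$ reaches everything'', the very same construction proves hardness of Strict $2$-TSA with source $a$.

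Two refinements remain. For the \emph{unrestricted} case one must forbid cheaper augmentations, e.g.\ a direct edge $(\{a,e_i\},t)$ (cost one per element) or a freshly created universal distributor toward the elements; I would neutralise these by a blow-up, replacing each element by a block of $N>m$ twins, all reached at time $2$ exactly from the set-vertices covering it, so that a single set edge connects a whole block at unit cost whereas any ad hoc route to a full block costs $\Omega(N)$. An optimal solution is then forced to select set-vertices and its value is again the minimum cover. For the \emph{simple} case the obstruction is that the single hub would carry the static edge $\{u,h\}$ at both time steps (receiving at $1$, emitting at $2$); I would instead give each vertex a private relay $m_u$ with edges $(\{u,m_u\},1)$ and $(\{m_u,w\},2)$ for $w\neq u$, which realises the same ``two-hop'' background reachability while assigning a single label to every static edge.

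The main obstacle, and where most of the care goes, is the simultaneous handling of these restrictions: one must verify that after the twin blow-up and the relay replacement \emph{every} background pair is still strictly connected, that the blocked pairs are exactly the $(a,e_i)$ (for every twin), and that no augmentation avoiding set-vertices can ever be cheaper, so that the reduction is faithful in the strict, simple and unrestricted settings at once. The combinatorial core (cover $\Leftrightarrow$ connectivity) is straightforward; the delicate part is this bookkeeping, closing off all alternative journeys while keeping the whole instance simple and within lifespan $2$.
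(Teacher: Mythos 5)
Your core mechanism is the same one the paper uses: a single deficient source whose only repair is ``add a time-$1$ edge to a distributor, which forwards at time $2$,'' so that the minimum augmentation equals a minimum cover (you reduce from \textsc{Set Cover}, the paper from \textsc{Dominating Set}, which is the same covering structure). The basic restricted-$F$ reduction and the twin blow-up for the unrestricted case are sound in outline. The problem is the part you yourself defer as ``bookkeeping'': your fix for the \emph{simple} case is broken as stated. If each relay $m_u$ carries edges $(\{m_u,w\},2)$ for all $w\neq u$, then the source's own relay gives the fixed journey $a\to_1 m_a\to_2 e_i$ to \emph{every} element, so the instance is already connected with zero additions and the reduction collapses. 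If instead you restrict the relays to emit only to non-elements (to match the hub's reachability), you must then re-establish, without giving $a$ a free route, that elements are reachable from every other vertex \emph{including the relays themselves} --- and the natural way to do that (giving relays time-$1$ edges to the $f_j$'s) reintroduces static edges present at both time steps, i.e.\ non-simplicity. Moreover, any vertex with fixed time-$2$ edges to many element-twins is a unit-cost universal distributor in the unrestricted setting, so the relay design interacts badly with the blow-up unless these two fixes are engineered jointly. None of this is verified in your sketch, and it is precisely where the proof lives.

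The paper avoids the entire issue with one structural trick you might adopt: put the whole background on a \emph{complemented clique}. It adds $x$ (the deficient source) and $y$, makes $V\cup\{y\}$ complete, labels the original edges of $G$ (and $\{x,y\}$) with time $2$ and all remaining edges with time $1$. Every pair inside $V\cup\{y\}$ is then directly adjacent at exactly one time, so background connectivity is immediate and simplicity is automatic; $x$ is reached via $u\to_1 y\to_2 x$; and the only way to connect $x$ to $V$ is to add edges $(\{x,u\},1)$ and exploit the time-$2$ edges of $G$, which is exactly domination. A short exchange argument (any added edge not of this form can be replaced by one that is) handles the unrestricted case with no padding. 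Your reduction can almost certainly be repaired along these lines, but as written it does not yet establish the simple case, and the unrestricted case depends on a relay/blow-up combination whose correctness you have not checked.
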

        \begin{proof}
            The problem is trivially in NP. We make a reduction from \textsc{Dominating Set}. In that problem, given a (static) graph $G = \left( V,E \right)$ and a positive integer $K$, we want to determine if there is a dominating set of size at most $K$ in $G$, i.e., a subset $V'\subseteq V$ with $|V'| \leq K$ such that for all $u \in V \backslash V'$ there is a $v \in V'$ for which $\{u,v\} \in E$. 
            
            Let $I = (G,K)$ be an instance of DOMINATING SET. We construct $\mathcal{G}=(G',\lambda)$ a temporal graph (see figure \ref{fig:2TCAS}) as follows:
            \begin{itemize}
                \item $V' = V \cup \{x,y\}$
                \item $E'$ contains edge $\{x,y\}$ and all edges between any two vertices in  $V \cup \{y\}$
                \item $\lambda(\{u,v\}) = \begin{cases}
                    \{2\} & \text{if }\{u,v\} \in E \cup \{\{x,y\}\} \\
                    \{1\} & \text{else}
                \end{cases}$
            \end{itemize}
            As we are in the unrestricted case, any temporal edge can be added. The maximal number of temporal edges that we can add is $K$.
            
            We now show that there exists a dominating set for $\mathcal G$ of size at most $K$ if and only we can make $\mathcal{G}$ temporally connected by adding at most $K$ temporal edges.
            
            Let us first make some remarks. The underlying subgraph of $G'$ excluding $x$ is complete, meaning that this subgraph is already temporally connected (without adding any edge). Moreover, every vertex $u$ of $V$ can reach $y$ at time 1,  hence the journey $u \xrightarrow{1} y \xrightarrow{2} x$ exists, so we have $u  \rightarrow x$ for every $u\in V$. Then, to make $\mathcal{G}$ temporally connected, we have to ensure that $x \rightarrow u$ for all $u\in V$.            

            \begin{itemize}
                \itemif Suppose that $U$ is a dominating set for $G$ of size at most $K$. For each $u\in U$, we add the edge $\{x,u\}$ at time 1. Then each vertex $u\in U$ is reachable from $x$ at time 1. As $U$ is a dominating set, each vertex $v\in V\setminus U$ is adjacent to some vertex $u\in U$. Edge $\{u,v\}$ is by construction present at time 2, so $v$ is reachable from $x$. The addition of these $|U|\leq K$ temporal edges make the graph connected. 
                \itemonlyif Conversely, suppose that adding a set  of at most $K$ temporal edges makes $\mathcal{G}$ temporally connected. Suppose that we add a temporal edge between two vertices $u$ and $v$ different from $x$. If it is at time 1 this edge is useless to make $x$ a source. If it is at time $2$, this edge is only useful is there is an edge say $\{x,u\}$ or $\{x,v\}$, say $\{x,u\}$, at time 1. Then we can replace the addition of $(\{u,v\},2)$ by the addition of $(\{x,v\},1)$. This means that adding (only) temporal edges incident to $x$ at time 1 is dominant. Then clearly $(\{x,y\},1)$ is useless. So, let $U$ be the set of vertices in $V$ for which a temporal edge $(\{x,u\},1)$ has been added. Note that $|U|\leq K$. Take a vertex $v\in V\setminus U$. By construction of $\mathcal{G}$ and definition of $U$, the unique possible path from $x$ to $v$ is through a vertex $u\in U$ at time 1. Then $v$ must be adjacent to $u$ in $V$. $U$ is a dominating set of size at most $K$.                 
 
            \end{itemize}
            This shows that the unrestricted case is NP-complete. From the above, restricting the augmented set to be in $F=\{(\{x,v)\},1):v\in V\}$ shows NP-hardness for the simple case.  
            
            \begin{figure}[!ht]
    \begin{center}
    \scalebox{.7}{
    \begin{tikzpicture}[node distance={20mm}, thick, main/.style = {draw, circle, minimum size=0.8cm},sing/.style={circle, fill, inner sep=1.5pt}]
        \foreach \i in {1,...,6} {
            \coordinate (v\i) at ({360/6 * (\i - 1)}:2);
            \node[sing] (v\i) at (v\i) {};
        }
        
        \node[draw,inner sep=0.7cm, fit=(v1) (v2) (v3) (v4) (v5) (v6)](FIt1) {};
        \node [below left = 0.2cm and 0.2cm] at (FIt1.south east) {\Large $G = (V,E)$};
        \node[main] (x) [left = 2cm of FIt1.west] {$x$};
        \node[main] (y) [above = 2cm of FIt1.north] {$y$};
        \draw[red, bend left=45] (x) to[looseness=1.5] (y);
    
        \foreach \i/\j in {1/2, 1/3, 1/4, 2/5, 3/5, 4/6, 5/6} {
            \draw (v\i) -- (v\j);
        }
        \foreach \i/\j in {1/5,1/6,2/3,2/4,2/6,3/4,3/6,4/5} {
            \draw[blue] (v\i) -- (v\j);
        }
    
        \foreach \i in {1,...,6} {
            \draw[blue, dashed] (v\i) to[out=90+25*\i, in=0] (x);
            \draw[blue] (v\i) to[out=25*\i, in=-90] (y);
        }
        
        \node[draw=none] (n1) [right=1cm of FIt1.east] {};
        \node[draw=none] (n2) [right=0.5cm of n1] {};
        \node[draw=none,red, align=left] [right = 0cm of n2] {$\lambda(e)=\{2\}, e\notin E$};
        \node[draw=none] (b1) [above=0.3cm of n1] {};
        \node[draw=none] (b2) [above=0.3cm of n2] {};
        \node[draw=none,blue, align=left] [right = 0cm of b2] {$\lambda(e)=\{1\}, e\notin E$};
        \node[draw=none] (r1) [below=0.3cm of n1] {};
        \node[draw=none] (r2) [below=0.3cm of n2] {};
        \node[draw=none,black, align=left] [right = 0cm of r2] {$\lambda(e)=\{2\}, e\in E$};
        \draw[red] (n1) -- (n2);
        \draw[blue] (b1) -- (b2);
        \draw (r1) -- (r2);
    
    \end{tikzpicture}
    }

    \end{center}
    \caption{Example of a transformation of a \textsc{Dominating Set} instance into Strict TCA (in dashed, the augmentation set for the simple case).}
    \label{fig:2TCAS}
\end{figure}
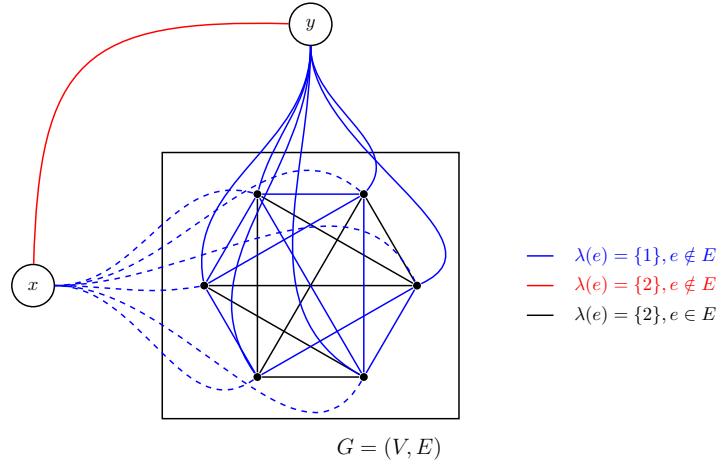
            
        \end{proof}
        In the previous reduction, making the graph temporally connected is exactly the same as making $x$ a source of this temporal graph. Thus we derive the following result:
        \begin{corollary}
            Strict 2-TSA is NP-complete, even on simple temporal graphs and in the unrestricted case.
        \end{corollary}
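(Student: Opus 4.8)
The plan is to reuse verbatim the reduction from \textsc{Dominating Set} built in the proof of Theorem~\ref{thm:2-STCA}, simply designating $x$ as the vertex we wish to turn into a source. Membership in NP is immediate, since given a candidate $F'$ one checks in polynomial time whether $x$ reaches every vertex in $\mathcal{G}\uparrow F'$ by strict journeys. The whole content of the proof is therefore to argue that, on the instances produced by that reduction, making $x$ a source is equivalent to making the graph temporally connected, so that the chain of equivalences established in Theorem~\ref{thm:2-STCA} transfers directly.

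To see this equivalence, recall the two structural remarks from the previous proof. First, the underlying graph restricted to $V\cup\{y\}$ is complete with all its edges present, so every vertex of $V\cup\{y\}$ reaches every other vertex of $V\cup\{y\}$, independently of any augmentation. Second, every $u\in V$ reaches $x$ through the journey $u\xrightarrow{1} y \xrightarrow{2} x$. Consequently, the only reachability requirement that full temporal connectivity adds on top of ``$x$ is a source'' is already guaranteed by the construction: if $x$ reaches every vertex of $V$, then, combined with the completeness of $V\cup\{y\}$ and the fact that everything reaches $x$, the graph is strictly connected. The converse implication (connected $\Rightarrow$ $x$ is a source) is trivial. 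Hence on these instances, $F'$ is a strictly connecting set if and only if $F'$ makes $x$ a strict source.

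It follows that Strict $2$-TSA with target vertex $x$ admits a solution of size at most $K$ if and only if Strict $2$-TCA does, which by Theorem~\ref{thm:2-STCA} happens if and only if $G$ has a dominating set of size at most $K$. This establishes NP-hardness of Strict $2$-TSA in the unrestricted case. Moreover, the refinements carry over with no change: the backward direction of the argument in Theorem~\ref{thm:2-STCA} was already phrased in terms of making $x$ a source (discarding edges that are ``useless to make $x$ a source''), so restricting the augmentation set to $F=\{(\{x,v\},1):v\in V\}$ produces the same instance and the same equivalence, giving NP-hardness for the simple case as well.

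I do not foresee a genuine obstacle here; the only point requiring care is verifying that demanding merely ``$x$ is a source''---a priori weaker than full temporal connectivity---is in fact equivalent to it on these specific instances, and this is exactly what the two structural remarks of the previous proof guarantee.
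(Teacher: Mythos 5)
Your proposal is correct and matches the paper's argument: the paper derives this corollary in one line by observing that, in the reduction of Theorem~\ref{thm:2-STCA}, making the graph temporally connected is exactly the same as making $x$ a source, which is precisely the equivalence you verify (everything already reaches $x$ via $y$, and $V\cup\{y\}$ is already strictly connected). You merely spell out the justification that the paper leaves implicit.
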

    
    \subsection{(Non-strict) TCA}
        We focus in this section on the non-strict setting. We  prove that TCA and TSA are NP-complete on simple graphs of lifespan 2. We first introduce a characterization of connected graphs in the non-strict setting which allows for a reformulation of TCA. We then prove the NP-completeness of this reformulation on simple graphs of lifespan 2. Finally we give a separate proof for the NP-completeness of Simple 2-TSA.
        
        We give the following property $\mathcal{P}$, which is verified when there is a "temporal path" from every snapshot connected components at time 1 to every snapshot connected components at the lifespan of the graph:
        \begin{propertyP}\label{prop:P}
            Let $C_1^1,\dotsc,C_{k_1}^1,\dotsc,C_1^T,\dotsc,C_{k_T}^T$ be the snapshot components of a temporal graph $\mathcal G$. $\mathcal G$ satisfies property $\mathcal P$ if for every $i_1 \leq k_1$ and $i_T \leq k_T$, there is a sequence of $i_2,\dotsc,i_{T-1}$ such that $\forall j < T : | C^j_{i_j} \cap C^{j+1}_{i_{j+1}} | \geq 1$.
        \end{propertyP}
        \begin{theorem} \label{thm:char}
            A temporal graph $\mathcal G$ is temporally connected in the non-strict setting if and only if it verifies property $\mathcal P$.
        \end{theorem}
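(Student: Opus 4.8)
The plan is to translate temporal connectivity, which is phrased in terms of journeys between pairs of vertices, into the purely combinatorial overlap condition that property $\mathcal{P}$ records. The crucial feature of the non-strict setting is that a journey may use several temporal edges at one and the same time step, so within a single snapshot $G_t$ one can travel freely between any two vertices of the same snapshot component. Reachability therefore never distinguishes vertices of a common snapshot component, and the right objects to track are the components $C^t_i$ themselves. Concretely, I would fix a source $u$ and let $R_t$ be the set of vertices reachable from $u$ using only temporal edges at times $\le t$ (with $R_0=\{u\}$). The first step is to observe that $R_t$ is exactly the union of the snapshot components at time $t$ that meet $R_{t-1}$: from any $w\in R_{t-1}$ one can, at time $t$, reach every vertex of $w$'s snapshot component at time $t$ and nothing more before time $t+1$, and conversely every such vertex is reachable. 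This gives a recursion in which each $R_t$ is a union of whole snapshot components.

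The heart of the proof is then the following lemma: $v$ is reachable from $u$ if and only if there exist snapshot components $D_1,\dots,D_T$, with $D_t$ a component at time $t$, such that $u\in D_1$, $v\in D_T$, and $|D_t\cap D_{t+1}|\ge 1$ for every $t<T$ — a \emph{chain} of overlapping components linking $u$ to $v$. I would prove this by induction on $t$ from the recursion above. For the forward direction, $v\in R_t$ means $v$ lies in some component $D_t$ at time $t$ meeting $R_{t-1}$; a witness $w\in D_t\cap R_{t-1}$ is, by the induction hypothesis, the endpoint of a chain up to time $t-1$, which extends by $D_t$ since $w\in D_{t-1}\cap D_t$. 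The converse simply unfolds a given chain step by step, using each nonempty intersection to certify that the next component meets $R_{t}$.

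Finally I would derive the two implications of the theorem from this lemma. Since the snapshot components partition $V$ at each time, every vertex $u$ lies in a unique component $C^1_{i_1}$ at time $1$ and every $v$ in a unique $C^T_{i_T}$ at time $T$, and the chain witnessing $u\rightarrow v$ necessarily starts at $C^1_{i_1}$ and ends at $C^T_{i_T}$; moreover such a chain depends only on these two components, not on the particular endpoints inside them. Hence $\mathcal{G}$ is temporally connected (every $v$ reachable from every $u$) precisely when a chain exists between every pair formed by a time-$1$ component and a time-$T$ component, which is exactly property $\mathcal{P}$.

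I expect the main obstacle to be the careful bookkeeping in the inductive lemma — specifically, arguing rigorously that in the non-strict setting reachability collapses to the level of snapshot components, so that $R_t$ is always a union of full components, and that a transition between consecutive times can always be taken at a shared vertex. One should also note that the statement is intended for lifespan $T\ge 2$ (the regime of interest throughout the paper); the degenerate case $T=1$, where the quantifier in $\mathcal{P}$ becomes vacuous across distinct components, must be read with the convention that the chain reduces to a single component, i.e.\ $C^1_{i_1}=C^T_{i_T}$.
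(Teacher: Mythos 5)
Your proposal is correct and follows essentially the same route as the paper: both directions rest on the correspondence between non-strict journeys and chains of pairwise-intersecting snapshot components, with your reachability sets $R_t$ merely giving a more formal inductive packaging of the paper's direct extraction of a chain from a journey (and construction of a journey from a chain). Your closing caveat about the degenerate case $T=1$ is a fair observation but does not affect the regime the paper works in.
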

        \begin{proof} We prove that $\mathcal P$ is a characterization of temporally connected graphs:
            \begin{itemize}
                \itemif Suppose $\mathcal G$ is temporally connected. Take $i_1\leq k_1$ and $i_T\leq k_T$, and consider $u\in C^1_{i_1}$ and $v\in C^T_{i_T}$. We know that there exists a journey from $u$ to $v$. Note that by waiting (in $u$ or in $v$) we can assume without loss of generality that the journey starts (in $u$) at time $1$ and ends (in $v$) at time $T$. At each time step $t$, the journey visits some vertices of a connected component $C^t_{i_t}$ (with $C^1_{i_1}$ containing $u$ and $C^T_{i_T}$ containing $v$). The last vertex visited in $C^{t-1}_{i_{t-1}}$ is by definition both in $C^{t-1}_{i_{t-1}}$ and $C^t_{i_t}$, so the intersection is not empty.
    
                \itemonlyif Suppose $\mathcal G$ verifies $\mathcal{P}$, i.e. for every snapshot component at time 1 and at time $T$ we have a sequence of snapshot components starting at 1 and ending at $T$, with a non-empty intersection between consecutive components. For any vertex $u$, we build the journey going to any $v$ with such a sequence. Taking the snapshot component of $u$ at time 1 and the snapshot component of $v$ at time $T$, and then applying the property $\mathcal{P}$, gives us a sequence between those two components. Since any consecutive snapshot component in that sequence has a non-empty intersection, that means that at every step we can travel to any vertex that is part of that intersection in the snapshot component to get to the next one. Repeating this until $v$ is reached gives us a journey from $u$ to $v$.\qedhere
            \end{itemize}
        \end{proof}
        On graphs with a lifespan of 2, we get the following corollary:
        \begin{corollary}\label{cor:t2}
            A graph with a lifespan of 2 is temporally connected in the non-strict setting if and only if for every $i_1 \leq k_1$ and every $i_2 \leq k_2$: $|C^1_{i_1} \cap C^2_{i_2}| \geq 1$.
        \end{corollary}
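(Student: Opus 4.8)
The plan is to obtain this corollary as an immediate specialization of the characterization already established in Theorem~\ref{thm:char}. By that theorem, a temporal graph $\mathcal G$ is temporally connected in the non-strict setting if and only if it satisfies property $\mathcal P$, so it suffices to unfold what property $\mathcal P$ asserts in the particular case $T = 2$ and check that it coincides verbatim with the stated condition.

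Concretely, I would fix $T = 2$ and examine the quantifiers in property $\mathcal P$. The property ranges over every $i_1 \leq k_1$ and every $i_T \leq k_T$, which here are exactly $i_1 \leq k_1$ and $i_2 \leq k_2$. The intermediate sequence $i_2, \dotsc, i_{T-1}$ that property $\mathcal P$ requires becomes empty, since $T - 1 = 1$: there are no indices to choose between the first and last snapshot. Consequently the universally quantified condition $\forall j < T : |C^j_{i_j} \cap C^{j+1}_{i_{j+1}}| \geq 1$ collapses to its single instance at $j = 1$, namely $|C^1_{i_1} \cap C^2_{i_2}| \geq 1$. Requiring this for all choices of $i_1 \leq k_1$ and $i_2 \leq k_2$ is precisely the statement of the corollary, which completes the argument.

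There is essentially no obstacle here; the only point requiring a little care is the correct handling of the empty intermediate sequence and of the degenerate quantifier "$\forall j < T$" when $T = 2$, to make explicit that no nontrivial chaining condition survives and that a single pairwise intersection governs connectivity. I would therefore keep the write-up to a couple of lines, invoking Theorem~\ref{thm:char} and then performing this specialization.
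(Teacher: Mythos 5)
Your proposal is correct and matches the paper's intent exactly: the corollary is obtained as the immediate specialization of Theorem~\ref{thm:char} to $T=2$, where the intermediate sequence in property $\mathcal P$ is empty and the chaining condition reduces to the single pairwise intersection requirement. The paper treats this as immediate and gives no separate proof, so your two-line write-up is the same argument made explicit.
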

        In other words, every snapshot component at time 1 has to intersect every snapshot component at time 2. 
        
        We can now reformulate the unrestricted version of 2-TCA as a problem on a binary matrix. Let us consider a temporal graph $\mathcal{G}$ of lifespan 2, with $k_1$ connected components at time 1 and $k_2$ at time 2. We build a matrix $B$, that we call the component-intersection matrix of $\mathcal{G}$, with $k_1$ lines and $k_2$ columns, such that  $B[i,j]=1$ if $C^1_i$ and $C^2_j$ intersect, and 0 otherwise. Adding a temporal edge $(\{u,v\},1)$ in $\mathcal{G}$ corresponds to merging the connected components of $u$ and of $v$ at time 1. On matrix $B$, this corresponds to merging the corresponding raws by an OR-combination. Similarly, adding a temporal edge at time 2 corresponds to merging by an OR-combination the columns corresponding to the components of $u$ and of $v$ at time 2.  
        
        In other words, by the construction above the problem can be reformulated as the following one (where combinations can be on rows on or columns).
        \begin{quote}
            \textsc{OR-Combination To One-Filled} (OCTO)\\
            \textbf{Instance:} A binary matrix $B$ and a positive integer $K$.\\
            \textbf{Question:} Can $B$ be transformed into a one-filled matrix with at most $K$ OR-combinations?
        \end{quote}
        
        Interestingly, the following lemma shows that the other direction works as well.
        
        \begin{lemma}\label{lemmaCC}
        Let $B$ be a binary matrix.
        There exists a simple temporal graph $\mathcal{G}$ of lifespan 2 such that $B$ is its component-intersection matrix.
        \end{lemma}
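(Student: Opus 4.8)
The plan is to build a graph whose time-$1$ snapshot components are exactly the rows of $B$ and whose time-$2$ snapshot components are exactly the columns, by reversing the merging interpretation: introduce a single vertex for every $1$-entry, placed at the intersection of its row-component and its column-component. Before constructing anything I would record that we may assume $B$ has no all-zero row and no all-zero column. Indeed any genuine component-intersection matrix already has this feature: a time-$1$ component $C^1_i$ is non-empty, and any of its vertices lies in some time-$2$ component $C^2_j$, which forces $B[i,j]=1$, so row $i$ is not all-zero (and symmetrically for columns). Hence an all-zero line makes $B$ unrealizable, and the relevant instances (both those coming from actual temporal graphs and those produced by the hardness reductions) exclude this degenerate case.

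For the construction I would take one vertex $v_{i,j}$ for each pair $(i,j)$ with $B[i,j]=1$. At time $1$, for each row index $i$ I add a set of edges forming a spanning tree (a star, say) on $\{v_{i,j}:B[i,j]=1\}$; at time $2$, for each column index $j$ I add a spanning tree on $\{v_{i,j}:B[i,j]=1\}$. Since the time-$1$ edges join only vertices sharing their first index and never link two distinct rows, the time-$1$ snapshot components are exactly the sets $C^1_i=\{v_{i,j}:B[i,j]=1\}$, one per non-empty row, giving $k_1$ components; symmetrically the time-$2$ components are the $C^2_j=\{v_{i,j}:B[i,j]=1\}$, giving $k_2$ components.

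It then remains to verify two facts. First, $C^1_i\cap C^2_j$ equals $\{v_{i,j}\}$ when $B[i,j]=1$ and is empty otherwise, because a vertex lies in $C^1_i$ precisely when its first index is $i$ and in $C^2_j$ precisely when its second index is $j$; thus the component-intersection matrix of the constructed graph is exactly $B$. Second, the graph is simple: every static edge used at time $1$ joins vertices with equal first index and every static edge used at time $2$ joins vertices with equal second index, so an edge carrying both labels would force both indices to coincide and collapse to a single vertex, which is impossible; hence $|\lambda(e)|=1$ for every edge. I expect the only genuine subtlety to be the degenerate-line observation flagged above, together with the bookkeeping that the spanning trees yield exactly one component per line and introduce no spurious cross-line connections; once that is in place, the intersection computation and the simplicity check are immediate.
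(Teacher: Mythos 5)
Your construction is essentially identical to the paper's: one vertex $v_{i,j}$ per $1$-entry, with edges at time $1$ connecting vertices sharing a row index and edges at time $2$ connecting vertices sharing a column index (the paper uses cliques where you use stars, an immaterial difference). Your additional remarks on all-zero lines and the simplicity check are correct but routine, so the proposal is correct and follows the paper's approach.
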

        \begin{proof}
            We put one vertex $v_{i,j}$ for each $i,j$ such that $B[i,j]=1$. At time 1, we put all edges $\{v_{i,j},v_{i,j'}\}$ for $j\neq j'$, so that row $i$ corresponds to a connected component at time 1. At time 2, we put all edges $\{v_{i,j},v_{i',j}\}$ for $i\neq i'$, so that column $j$ corresponds to a connected component at time 2.     
        \end{proof}
        
        Now we prove that OCTO is NP-complete.
        
        \begin{lemma}\label{lemmaOTO}
            OCTO is NP-complete
        \end{lemma}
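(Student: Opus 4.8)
The plan is to establish membership in NP and then prove NP-hardness by reduction from an NP-complete packing/partition problem.

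Membership is immediate: a certificate is the list of at most $K$ OR-combinations (equivalently, a partition of the rows into $R$ groups and of the columns into $C$ groups); applying them and checking that the resulting matrix is one-filled takes polynomial time. Note that a set of merges leaving $R$ row-groups and $C$ column-groups uses exactly $(k_1 - R) + (k_2 - C)$ combinations, so solving OCTO within budget $K$ is the same as finding a row-partition and a column-partition with $R + C \ge k_1 + k_2 - K$ such that every (row-group, column-group) pair still contains a $1$. I would use this partition reformulation throughout.

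For hardness I would reduce from \textsc{Domatic Number}: given a graph $G=(V,E)$ and an integer $t$, decide whether $V$ can be partitioned into at least $t$ dominating sets; this is NP-complete already for $t=3$. Index both the rows and the columns of a base matrix $B_0$ by $V$ and set $B_0[u,w]=1$ iff $w \in N[u]$ (the closed neighbourhood). The point of this encoding is that a group $P$ of rows covers every column iff $\bigcup_{u\in P} N[u] = V$, i.e. iff $P$ is a dominating set of $G$. Hence, if we were only allowed to merge rows (columns kept as singletons), the maximum achievable $R$ would be exactly the domatic number $d(G)$, and OCTO with budget $|V| - t$ would answer exactly the \textsc{Domatic Number} instance. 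To force this one-dimensional behaviour, I would replicate each column a polynomial number $M$ of times, obtaining the matrix $B$ fed to OCTO.

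The forward direction is then easy: a partition of $V$ into $t$ dominating sets gives $t$ row-groups, each covering all original columns and hence all $M$ copies, while keeping every column separate; this uses $|V| - t$ merges. The backward direction is the crux, and the main obstacle. A priori, column merges could help: merging all copies of some vertex-column $w$ with a covered column excuses row-groups from dominating $w$, which might let us split the rows into more than $d(G)$ groups and thereby beat the budget $|V|-t$. The heart of the proof is a \emph{neutralization lemma} stating that, once $M$ is large enough, this can never pay off: as long as some copy of a column $w$ survives as a singleton column-group, every row-group must dominate $w$; and excusing $w$ requires pushing all $M$ of its copies into non-singleton column-groups, at a cost that a charging argument shows always exceeds the number of extra row-groups it can create. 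Combined with the fact that pairwise-disjoint dominating sets number at most $d(G)$, this yields that no solution uses fewer than $|V| - d(G)$ combinations, closing the reduction. Calibrating $M$ and making the charging argument airtight is the delicate step; should it prove too fragile, a fallback is a direct reduction from 3-SAT with variable- and clause-gadgets placed in disjoint row/column blocks and guard entries that force the combinations inside each gadget to decouple.
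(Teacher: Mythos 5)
Your proposal is correct and is essentially the paper's argument in transposed form: the paper reduces from \textsc{Disjoint Set Covers} (of which \textsc{Domatic Number} is the special case where the sets are closed neighbourhoods) and replicates the \emph{rows} $m+1$ times so that any useful row-combination would already exceed the budget, forcing all combinations onto columns; you replicate the \emph{columns} to force all combinations onto rows. Both rest on the same partition reformulation you state at the outset. The one step you flag as delicate --- the ``neutralization lemma'' --- is in fact immediate and needs no charging argument or careful calibration, only $M > |V| - t$ (say $M = |V|+1$). Indeed, suppose some row-group $P$ fails to dominate a vertex $w$. Then every one of the $M$ copies of column $w$ must lie in a column-group that also contains some column having a $1$ in the rows of $P$; if $g$ column-groups contain copies of $w$, they collectively hold at least $M + g$ columns (all $M$ copies plus at least one ``good'' column per group), hence account for at least $M$ column-merges. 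Since $M$ alone exceeds the entire budget $|V|-t$, no feasible solution can excuse any vertex, every row-group must be a dominating set, and the row-groups form a domatic partition of size at least $t$. This is exactly the paper's one-line observation that ``a meaningful combination on rows has to be done in the $m+1$ copies,'' so your fallback to a 3-SAT reduction is unnecessary.
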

        \begin{proof}
            The problem is trivially in NP. We reduce from the decision version of \textsc{Disjoint Set Covers} \cite{cardei2005improving}. In this problem, we are given a finite set $T = \{e_1,\dotsc,e_n\}$, a collection $C = \{T_1,\dotsc,T_m\}$ of subsets of $T$ and a positive integer $K$. We want to determine if $C$ can be partitioned into at least $K$ disjoint parts, such that every part is a cover of $T$. 
 
            Let $I=(T,C,K)$ be an instance of DSC. We construct $B$, a binary matrix with $n (m+1)$ rows and $m$ columns as follows:
            $B[i,j] = \begin{cases}
                1 & e_i \in T_j \\
                0 & e_i \notin T_j
            \end{cases}, \forall i \in \dintervals{1}{n}, \forall j \in \dintervals{1}{m}$. For $i > n$, the matrix repeats itself, $B[i,j] = B[i-n,j]$.
            
            We claim that $I$ is a YES-instance if and only if $I'=(B,m-K)$ is a YES-instance of OCTO.
            \begin{itemize}
                \itemonlyif Suppose $I'$ is a YES-instance of OCTO. Then there is a set of $p \leq m-K$ OR-combinations of rows or columns that results in a one-filled matrix. Note that the combinations are all on columns, since a meaningful combination on rows has to be done in the $m+1$ copies of the matrix (thus with $m>m-K$ combinations). After the OR combinations of columns, each of the $m-p\geq K$ remaining columns corresponds to OR-combinations of a set of initial rows: let $O = \{\{c^1_1,\dotsc, c^1_{k_1}\},\dotsc, \{c^{m-p}_1,\dotsc, c^{m-p}_{k_{m-p}}\}\}$ be the set of combinations done to obtain the columns of the resulting matrix. Since they result from OR combinations, if there is a 1 in some row for some resulting column, that implies that one of the columns used to get that resulting column had a 1 in that row. The matrix is one-filled, so this holds for every resulting column and every row. If we let $\mathcal{C}$ be the same set as $O$ but in which we replace each column by the corresponding set in $C$, we get a disjoint partition of $C$, and each element of the partition is a cover for the rows, thus a cover for $T$. The size is given by $m - p \geq  K$, hence $I$ is a YES-instance of DSC.
    
                \itemif Suppose $I$ is a YES-instance of DSC. Let $P$ be a partition of $C$ into $k \geq K$ part, each part being a disjoint covers of $T$. In the matrix $B$, for every cover, there is a 1 in each row:
                $$ \forall j \in \dintervals{1}{k},\ \forall i \in \dintervals{1}{n(m+1)},\ \exists l \in P_j:\ B[i,l] = 1$$
                Thus combining the columns in each disjoint cover yields a one-filled matrix. This requires $m - k \leq m - K$ combinations, hence $I'$ is a YES-instance of OCTO.\qedhere
            \end{itemize}
        \end{proof}
        Together with Lemma~\ref{lemmaCC}, from Lemma~\ref{lemmaOTO} we derive the following result.  
        \begin{theorem}
            Non-strict 2-TCA is NP-complete, even on simple graphs and even in the unrestricted case.
        \end{theorem}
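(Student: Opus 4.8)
The plan is to prove membership in NP and then reduce from OCTO, which is NP-complete by Lemma~\ref{lemmaOTO}. Membership is immediate: a candidate set $F'\subseteq F$ with $|F'|\leq K$ is a polynomial-size certificate, and by Corollary~\ref{cor:t2} verifying that $\mathcal{G}\uparrow F'$ is temporally connected only requires checking that every snapshot component at time $1$ meets every snapshot component at time $2$, which is polynomial. The whole point of the preceding development is that the reformulation preceding OCTO already establishes a dictionary between unrestricted $2$-TCA and OCTO, so the hardness will transfer through that dictionary once realizability is in hand.

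For hardness I would start from an arbitrary OCTO instance $(B,K)$ and invoke Lemma~\ref{lemmaCC} to build, in time polynomial in the size of $B$, a simple temporal graph $\mathcal{G}$ of lifespan $2$ whose component-intersection matrix is exactly $B$; the reduction then outputs the unrestricted $2$-TCA instance $(\mathcal{G},K)$. Correctness rests on the dictionary recalled above: adding a temporal edge at time $1$ (resp. time $2$) merges the two incident snapshot components at time $1$ (resp. time $2$), which on $B$ is precisely an OR-combination of the two corresponding rows (resp. columns). Combining this with Corollary~\ref{cor:t2}, which asserts that $\mathcal{G}\uparrow F'$ is temporally connected iff its component-intersection matrix is one-filled, yields that $\mathcal{G}$ admits a connecting set of size at most $K$ iff $B$ can be turned into a one-filled matrix with at most $K$ OR-combinations, i.e. iff $(B,K)$ is a YES-instance of OCTO. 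This settles the unrestricted case.

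The step I expect to require the most care is making the correspondence tight in both directions and, especially, deriving the \emph{simple} case. In the forward direction one should argue that augmenting edges may be assumed to join two \emph{distinct} components (an edge added inside a component is useless and can be discarded), so that the number of added edges matches the number of OR-combinations one-for-one. For the simple case one must additionally guarantee that the edges realizing the combinations never duplicate an edge already present at the other time step: a column merge should be realized by joining two vertices lying in distinct rows (which carry no time-$1$ edge between them), and symmetrically for a row merge. Here the structure inherited from Lemma~\ref{lemmaOTO} helps, since in those instances all meaningful combinations are on columns and the many row copies guarantee enough vertices in distinct rows; restricting $F$ to such edges keeps $\mathcal{G}\uparrow F$ simple and, exactly as in the treatment of the simple case in Theorem~\ref{thm:2-STCA}, yields NP-hardness for simple graphs as well.
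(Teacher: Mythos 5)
Your proposal is correct and follows essentially the same route as the paper, which derives the theorem directly from Lemma~\ref{lemmaCC} and Lemma~\ref{lemmaOTO} via the dictionary between edge additions and OR-combinations on the component-intersection matrix. You in fact supply more detail than the paper does (notably the care taken so that the augmented graph stays simple), but the underlying argument is the same.
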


        We now look at the source variant in the non-strict setting. The previous result cannot be extended easily like in the previous section; instead we give a separate reduction that proves the following:
        \begin{theorem} \label{thm:2-TSA}
            Non-strict 2-TSA is NP-complete, even on simple graphs and even in the unrestricted case.
        \end{theorem}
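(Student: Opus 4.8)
The plan is to reduce from \textsc{Set Cover}, after first deriving a one-sided analogue of Corollary~\ref{cor:t2} tailored to sources. In a non-strict graph of lifespan $2$, a journey out of $x$ first moves inside the time-$1$ snapshot component of $x$ (call it $C^1(x)$) and only then uses time-$2$ edges; hence the set of vertices reachable from $x$ is exactly the union of those time-$2$ snapshot components that intersect $C^1(x)$. I would record this as a lemma: \emph{$x$ is a source if and only if every snapshot component at time $2$ intersects $C^1(x)$.} This is the ingredient that makes the source condition combinatorially transparent and replaces the (symmetric) condition of Corollary~\ref{cor:t2}.

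Given a \textsc{Set Cover} instance with universe $\{e_1,\dots,e_n\}$, sets $\{S_1,\dots,S_m\}$ (we may assume each $e_i$ belongs to some set, else it is a trivial no-instance) and bound $K$, I would build the simple lifespan-$2$ graph whose component-intersection matrix is the incidence matrix $B$ with $B[j,i]=1 \iff e_i\in S_j$, using the construction of Lemma~\ref{lemmaCC}: the rows yield time-$1$ cliques $R_1,\dots,R_m$ (one per set) and the columns yield time-$2$ cliques $Q_1,\dots,Q_n$ (one per element), with $R_j\cap Q_i\neq\emptyset$ exactly when $e_i\in S_j$. To this I add a single isolated vertex $x$, which is the designated source. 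Then $C^1(x)=\{x\}$, the snapshot components at time $2$ are $Q_1,\dots,Q_n$ together with $\{x\}$, the budget is $K$, and $F$ is the full set of temporal edges (the simple case is recovered by restricting $F$ to the time-$1$ edges incident to $x$).

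For correctness, the forward direction is immediate: from a cover of size $c\le K$, adding the $c$ time-$1$ edges joining $x$ to one vertex of each chosen $R_j$ puts every chosen $R_j$ inside $C^1(x)$, so every $Q_i$ meets $C^1(x)$ and, by the lemma, $x$ becomes a source. The reverse direction is the \textbf{main obstacle}, since a solution may freely mix time-$1$ and time-$2$ edges between arbitrary vertices. I would control it with two structural observations. First, if the solution uses $a$ time-$1$ edges, then the component $C^1(x)$ can absorb at most $a$ of the original components $R_j$ (each added edge enlarges the component of $x$ by at most one original piece), so the corresponding index set $J$ satisfies $|J|\le a$ and covers exactly the elements $e_i$ with $Q_i\cap C^1(x)\neq\emptyset$. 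Second, each element left uncovered by $J$ has $Q_i$ disjoint from $C^1(x)$, so the solution must merge $Q_i$, through added time-$2$ edges, into a component meeting $C^1(x)$; a forest-counting argument shows that $b$ added time-$2$ edges can repair at most $b$ such uncovered elements. Replacing each of the $\le b$ uncovered elements by one set containing it then turns $J$ into a genuine cover of size at most $a+b\le K$. Combining the two directions gives the equivalence, and since neither observation used any restriction on $F$, the result holds in the unrestricted case as well as (by the indicated restriction of $F$) in the simple case.
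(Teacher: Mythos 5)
Your proof is correct, and it reaches the same destination by a recognizably parallel but not identical route. The paper reduces from \textsc{Hitting Set} with a gadget that is essentially the transpose of yours: it places the time-$1$ cliques on the occurrences of each \emph{element} and adds dedicated set-vertices $S_j$ that must be reached at time $2$, whereas you place the time-$1$ cliques on the \emph{sets} (the rows of the incidence matrix, reusing the machinery of Lemma~\ref{lemmaCC}) and let the element columns themselves be the time-$2$ components to be reached. Since \textsc{Hitting Set} and \textsc{Set Cover} are the same problem up to transposing the incidence relation, the two constructions are duals of one another. The genuine differences are in the correctness argument. You first isolate a one-sided analogue of Corollary~\ref{cor:t2} ($x$ is a source iff every time-$2$ snapshot component meets $C^1(x)$), which the paper does not state explicitly and instead argues journey by journey; this lemma is correct (in a lifespan-$2$ non-strict graph every journey from $x$ is a block of time-$1$ edges followed by a block of time-$2$ edges) and makes the rest purely combinatorial. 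For the hard direction, the paper uses an edge-replacement argument to normalize any solution into one consisting only of edges $(\{x,v\},1)$, handling several cases (edges at time $1$ not incident to $x$, edges at time $2$, edges into $V_C$); you instead keep the solution as is and use a component-merging/forest-counting bound: $a$ time-$1$ edges absorb at most $a$ row-cliques into $C^1(x)$, and $b$ time-$2$ edges repair at most $b$ of the remaining columns, yielding a cover of size at most $a+b\le K$. Your counting argument avoids the case analysis of the replacement step and is, if anything, slightly more robust; the paper's normalization has the side benefit of directly exhibiting the simple-case augmentation set. Both proofs handle the unrestricted and simple cases identically, by restricting $F$ to time-$1$ edges incident to $x$.
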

        \begin{proof}
            The problem is trivially in NP. We make a reduction from \textsc{Hitting Set}.
            \begin{quote}
                \textsc{Hitting Set}\\
                \textbf{Input:} A set $S = \{e_1,\dotsc, e_n\}$, a collection of subsets $C = \{S_1,\dotsc,S_m\}, S_i \subseteq S,\ \forall i \in \{1,\dotsc,m\}$ and a positive integer $K \in \mathbb{N}$ with $ K \leq |S|$.\\
                \textbf{Question:} Is there a subset $S' \subseteq S$ with $|S'| \leq K$ such that $S_i \cap S' \neq \emptyset\ \forall i \in \{1,\dotsc,m\}$?
            \end{quote}
            Let $I=(S,C,K)$ be an instance of \textsc{Hitting Set}. We build a temporal graph  $\mathcal G = (G, \lambda)$ as follows (an illustration of this graph is shown in Figure \ref{fig:2TSA2}). Its vertex set $V$ is composed of:
            \begin{itemize}
                \item A vertex $x$
                \item For every set $S_j \in C$ that contains the element $e_i$, a vertex $e_iS_j$ (this set being referred as $V_{SC}$)
                \item For every set of $C$, a vertex $S_i$ (this set being referred as $V_C$)
            \end{itemize}
            
            At time 1, the only present edges are between vertices of $V_{SC}$ such that they correspond to the same element $e_i$. At time 2, there is an edge between every $e_iS_j \in V_{SC}$ and $S_j \in V_C$.

            The reduction is clearly polynomial time. We now prove that there exists a hitting set of $S$ for $C$ with size at most $K$ if and only if there is a set $F'$ of at most $K$ temporal edges such that  $x$ is a source in $\mathcal G \uparrow F'$.
            \begin{itemize}
                \itemonlyif Suppose that $F'$ with $|F'|\leq K$ is such that $x$ is a source in $\mathcal G \uparrow F'$. We claim that we can build $F''$ containing only edges between $x$ and $V_{SC}$ at time 1, such that $x$ is a source in $\mathcal G \uparrow F''$. Suppose there is a temporal edge $(\{u,v\},1) \in F'$ with $u,v$ different from $x$.
                If $x$ cannot reach neither $u$ nor $v$ at time 1, the edge is useless and can also be removed from $F'$. Otherwise $x$ can reach $u$ or $v$, say $u$, at time 1 without using $(\{u,v\},1)$. Then $(\{u,v\},1)$ may be useful to reach $v$ at time 1, but we can replace $(\{u,v\},1)$ by $(\{x,v\},1)$ with the same effect.
                The same reasoning holds for an edge $(\{u,v\},2)\in F'$. If $x$ cannot reach neither $u$ nor $v$ at 2 or before, the edge is useless. If $x$ can reach $u$ at some time without using $(\{u,v\},2)$, we can replace the edge by $(\{x,v\},1)$. After these replacements, we have added only temporal edges of the form $(\{x,v\},1)$. To conclude, suppose that we have added an edge $(\{x,S_i\},1)$, with $S_i \in V_C$. The only temporal edges incident to $S_i$ are at time 2 (apart from except $(\{x,S_i\},1)$), meaning that we do not gain more reachability from $x$ by arriving at time 1 at $S_i$ than at time 2. Then we can safely replace $(\{x,S_i\},1)$ by some $(\{x,e_jS_i\},1)$ (with $e_j\in S_i$). We build $F''$ by applying the described replacements to $F'$, and $S'$ by taking the corresponding elements of $S$ appearing as endpoints in $F''$. Note that $|S'|\leq K$. To reach a vertex $S_j$, $x$ must use a journey $x \xrightarrow[]{1} e_iS_j \xrightarrow[]{2} S_j$ eventually taking multiple edges in the snapshot component at time 1 containing $e_iS_j$. This implies that $e_i \in S_j$, hence $S_j\cap S'\neq \emptyset$. Since every $S_j$ is reachable from $x$, $S'$ is a hitting set.
                \itemif Conversely, suppose that there exists $S' \subseteq S$, a hitting set with size at most $K$ for $C$. Then consider $F'$ as the set of edges $(\{x,e_iS_j\},1)$ for each $e_i\in S'$, with $S_j$ chosen arbitrarily. Note that $|F'|\leq K$. After augmentation, $x$ can reach every $S_j$ that contains at least one element of $S'$. Since $S'$ is a hitting set, it follows that $x$ can reach all vertices $S_j$. Then at time 2 it can reach all remaining $e_iS_j$ from the vertices of $V_C$. Hence $x$ is a source.
            \end{itemize}
            This shows NP-hardness for the unrestricted case. From the above, restricting the augmented set to be included in $F=\{(\{x,v\},1),v\in V_{SC}\}$ shows NP-hardness for the simple case. 
            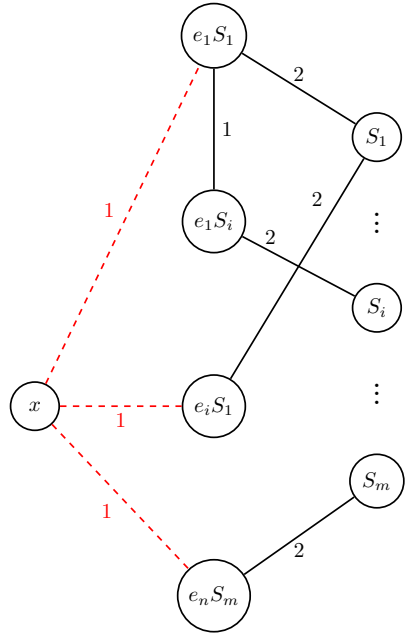
\begin{figure}[!ht]
    \centering
    \scalebox{.8}{
    \begin{tikzpicture}[node distance={20mm}, thick, main/.style = {draw, circle, minimum size=0.8cm}]
        \node[main] (e1S1) [] {$e_1S_1$};
        \node[main] (e1Si) [below = 2cm of e1S1] {$e_1S_i$};
        \node[main] (eiS1) [below = 2cm of e1Si] {$e_iS_1$};
        \node[main] (enSm) [below = 2cm of eiS1] {$e_nS_m$};
        \node[main] (x) [left = 2cm of eiS1] {$x$};
        \node[main] (S1) [below right = 1cm and 2cm of e1S1] {$S_1$};
        \node[main] (Si) [below = 2cm of S1] {$S_i$};
        \node[main] (Sm) [below = 2cm of Si] {$S_m$};
        \node (ints1) [draw=none, below = 0.5cm of S1] {\textbf{$\vdots$}};
        \node (intsi) [draw=none, below = 0.5cm of Si] {\textbf{$\vdots$}};
        \draw[dashed, red] (x) -- node[midway, above left] {1} (e1S1);
        \draw[dashed, red] (x) -- node[midway, below] {1} (eiS1);
        \draw[dashed, red] (x) -- node[midway, below left] {1} (enSm);
        \draw[] (e1S1) -- node[pos=0.5, above] {2} (S1);
        \draw[] (e1Si) -- node[pos=0.25, above] {2} (Si);
        \draw[] (eiS1) -- node[pos=0.75, above left] {2} (S1);
        \draw[] (enSm) -- node[pos=0.5, below] {2} (Sm);
        \draw[] (e1S1) -- node[pos=0.5, right] {1} (e1Si);
    \end{tikzpicture}
    }
    \caption{Example of a transformation of an HS instance into 2-TSA. Dashed red edges correspond to the augmentation set for the simple case.}
    \label{fig:2TSA2}
\end{figure}
        \end{proof}
    
    \subsection{(1+1)-TCA}

        We consider in this section the following situation: suppose that we are given a static (non connected) graph, and that are given an extra time slot, $T=2$, where we can add temporal edges to make the graph connected. The question is to find the minimal number of edges that must be added for this. This is a particular case of TCA (where $F$ is the set of temporal edges at time $2$). We show in this section that it is polynomially solvable.
 
        We begin by the following property on connected graphs of lifespan 2. 
        \begin{lemma} \label{thm:comp}
             Let $\mathcal{G}$ be a temporally connected graph of lifespan 2. Let also $C^1_1,\dotsc,C^1_{k_1}$ and $C^2_1,\dotsc,C^2_{k_2}$ be the snapshot components at time 1 and 2. We have $k_1 \leq \min_{i_2 \in \dintervals{1}{k_2}}|C^2_{i_2}|$ and $k_2 \leq \min_{i_1 \in \dintervals{1}{k_1}}|C^1_{i_1}|$.
        \end{lemma}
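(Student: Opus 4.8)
The plan is to derive both inequalities directly from the characterization of temporal connectivity given in Corollary~\ref{cor:t2}, via a simple counting (pigeonhole) argument. Note first that that characterization---``every snapshot component at time 1 intersects every snapshot component at time 2''---is symmetric in the two time steps. Hence it suffices to establish one of the two inequalities, say $k_1 \leq \min_{i_2 \in \dintervals{1}{k_2}}|C^2_{i_2}|$, and the other will follow by exchanging the roles of time 1 and time 2.

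First I would recall that the snapshot components $C^1_1,\dotsc,C^1_{k_1}$ at time 1 are the connected components of the snapshot $G_1$, which is a graph on the \emph{full} vertex set $V$. Consequently they form a partition of $V$ and are in particular pairwise disjoint. I then fix an arbitrary time-2 component $C^2_{i_2}$. Since $\mathcal{G}$ is temporally connected, Corollary~\ref{cor:t2} guarantees $|C^1_{i_1} \cap C^2_{i_2}| \geq 1$ for every $i_1 \leq k_1$, so $C^2_{i_2}$ contains at least one vertex belonging to each of the $k_1$ time-1 components.

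The key step is to observe that these representatives are pairwise distinct: because the time-1 components are disjoint, a vertex of $C^2_{i_2}$ lying in $C^1_{i_1}$ cannot also lie in $C^1_{i'_1}$ for $i_1 \neq i'_1$. Selecting one such vertex per time-1 component therefore exhibits $k_1$ distinct vertices inside $C^2_{i_2}$, whence $|C^2_{i_2}| \geq k_1$. As $i_2$ was arbitrary, this yields $k_1 \leq \min_{i_2 \in \dintervals{1}{k_2}}|C^2_{i_2}|$, and the symmetric argument (fixing a time-1 component and intersecting it with each of the $k_2$ disjoint time-2 components) gives $k_2 \leq \min_{i_1 \in \dintervals{1}{k_1}}|C^1_{i_1}|$.

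There is no substantial obstacle here; the only points requiring care are that the characterization is genuinely symmetric in the two layers (which it is, the intersection condition being invariant under swapping time 1 and time 2) and that the snapshot components partition $V$, so that the chosen representatives are indeed distinct and the counting is valid.
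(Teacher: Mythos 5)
Your proof is correct. The two facts you need both hold: the snapshot $G_t=(V,E_t)$ is defined on the full vertex set, so its components partition $V$ and are pairwise disjoint, and Corollary~\ref{cor:t2} (the $T=2$ case of Theorem~\ref{thm:char}, valid here since ``temporally connected'' in this lemma refers to the non-strict notion) gives one representative of each time-1 component inside any fixed time-2 component; disjointness makes these $k_1$ representatives distinct, so $|C^2_{i_2}|\geq k_1$, and the condition is symmetric in the two layers. The paper itself gives no actual argument for this lemma (its ``proof'' is the single sentence that it is done by contradiction), so your direct pigeonhole count is precisely the content that is missing there; it is the contrapositive of the same observation (if some $|C^2_{i_2}|<k_1$, then $C^2_{i_2}$ misses some time-1 component, contradicting Corollary~\ref{cor:t2}), and there is nothing to object to.
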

        \begin{proof}
            The proof is done by contradiction.
        \end{proof}
        This gives the core idea for a polynomial time algorithm. We construct at time 2 stars centered at each vertex of the smallest component of the first time step. Every star has at least a branch to all the other components of the fixed time step. The remaining vertices are arbitrarily linked to the first star, see \ref{algo:onebeat} for a pseudocode. 
        We give a formal proof of optimality:
        
        \begin{theorem}
             $(1+1)$-TCA is solvable in time $O(n^2)$.
        \end{theorem}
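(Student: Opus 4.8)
The plan is to translate both the feasibility and the cost of a solution into statements about the time-2 snapshot components and then invoke Corollary~\ref{cor:t2} and Lemma~\ref{thm:comp}. Since the static input graph forms the time-1 snapshot and the time-2 snapshot is initially empty, any candidate solution $F'$ adds exactly $|F'|$ edges at time 2, and these edges partition $V$ into the time-2 snapshot components $C^2_1,\dotsc,C^2_{k_2}$ of $\mathcal G \uparrow F'$. First I would observe that connecting $k_2$ vertex groups that partition $V$ needs at least $n-k_2$ edges, with equality attained by spanning trees (here, stars); hence $|F'|\ge n-k_2$ and minimizing $|F'|$ is equivalent to maximizing $k_2$. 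By Corollary~\ref{cor:t2}, $\mathcal G \uparrow F'$ is temporally connected if and only if every $C^2_j$ meets every time-1 component, i.e. every time-2 component is a transversal of $\{C^1_1,\dotsc,C^1_{k_1}\}$; in particular, when $k_1\ge 2$ no vertex may stay isolated at time 2.

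For the lower bound I would apply Lemma~\ref{thm:comp} to the (feasible) augmented graph $\mathcal G \uparrow F'$, which gives $k_2\le \min_{i}|C^1_i|=:s$, so every feasible solution satisfies $|F'|\ge n-k_2\ge n-s$. For the matching upper bound I would exhibit the construction sketched before the statement: because each of the $k_1$ time-1 components has at least $s$ vertices, I can pick inside every component $s$ disjoint nonempty groups and assemble $s$ transversals $D_1,\dotsc,D_s$ that partition $V$ (sending the leftover vertices of each component into $D_1$). Linking each $D_j$ by a star uses $\sum_{j}(|D_j|-1)=n-s$ edges, and by the transversal characterization the result is temporally connected. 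This meets the lower bound, proving the algorithm optimal.

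Finally, for the running time I would note that the time-1 components, the value $s$, and the emission of the $n-s$ edges are all computable in $O(n+m)=O(n^2)$ time. I expect the only delicate point to be the optimality argument: one must argue carefully that the cost of \emph{any} solution equals $n-k_2$ (not merely the cost of the constructed one) and then cap $k_2$ by $s$ through Lemma~\ref{thm:comp}; once the reformulation of feasibility as ``every time-2 component is a transversal'' is in place, both the construction and its correctness are routine.
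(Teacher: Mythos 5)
Your proposal is correct and follows essentially the same route as the paper: the lower bound combines $|F'|\ge n-k_2$ with Lemma~\ref{thm:comp} ($k_2\le\min_i|C^1_i|$), and the matching construction builds $\min_i|C^1_i|$ stars, each a transversal of the time-1 components, whose connectivity follows from Corollary~\ref{cor:t2}. The only nitpick is your closing remark that the cost of any solution ``equals'' $n-k_2$ --- it is only bounded below by $n-k_2$, which is what you actually use and is all that is needed.
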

        \begin{proof}
            The algorithm  creates $|C^1_\ell|$ stars that correspond to connected components at time 2. By construction, each star intersects all connected components at time 1, hence by Corollary~\ref{cor:t2} the graph is temporally connected. At time 2, each connected component $C^2_{i_2}$ has $|C^2_{i_2}| - 1$ edges (since it is a star), which sums to $\sum_{i_2 = 1}^{|C^1_\ell|} (|C^2_{i_2}| - 1) = n - |C^1_\ell|$. Suppose we add $p$ links at time 2, we have $k_2 \geq n - p$. To be connected we know $k_2 \geq |C^1_\ell|$, where $C^1_\ell$ is the smallest component at time 1, hence $p \geq n - |C^1_\ell|$.
        \end{proof}
        
        \begin{figure}[!ht]
    \begin{center}
        \begin{algorithm}[H]
            \SetKwInOut{Input}{Input}\SetKwInOut{Output}{Output}\SetKw{Return}{return}
            \Input{
            \begin{itemize}\item A temporal graph $\mathcal{G}$ of lifespan 1 
            \end{itemize}}
            \Output{\begin{itemize}
                \item A set $F'$ of temporal edges at time 2 such that $\mathcal{G}\uparrow F'$ is temporally connected.
            \end{itemize}}
            $F' \leftarrow \emptyset$ \\
            $\ell \leftarrow \argmin_{i \in \dintervals{1}{k_1}} |C^1_i|$ \\
            \For{$i \in \dintervals{1}{k_1}\backslash \{\ell\}$}{
                \For{$j \in \dintervals{1}{|C^1_i|}$}{
                    $s \leftarrow (j \mod \ell) + 1$ \\ 
                    $F' \leftarrow F' \cup \left(\{u^{C^1_i}_j,u^{C^1_\ell}_s \},2\right)$
                }
            }
            \Return{$\mu$}
            \caption{$(1+1)$-TCA algorithm \label{algo:onebeat}}
        \end{algorithm}
    \end{center}
\end{figure}

\section{Temporal pair connectivity augmentation }\label{sec:tpca}
    A natural extension to consider is the problem where instead of all the pairs, we are given a set of pairs to connect by augmentation. As a generalization of TCA and TSA, TPCA is NP-hard both in strict and non-strict settings. We tackle here the problem when the number of pairs is fixed.

    \subsection{Strict TPCA for a fixed number of pairs}
        This problem echoes the \textsc{Directed Generalized Steiner} network problem \cite{charikar1999approximation}, in static graphs:
        \begin{quote}
            \textsc{DG-Steiner}\\
            \textbf{Input:} A directed graph $G = (V,A)$, a weight function $w: A \rightarrow \mathbb R$, a set $X = \{(u_1,v_1),\dotsc,(u_p,v_p)\}$ of $p$ node pairs and two positive integers $B \leq p$ and $K \leq w(\mathcal{G})$.\\
            \textbf{Question:} Is there a subgraph $H$ of $G$ of cost $w(H) \leq B$ that satisfies at least $K$ node pairs from the set $X$, when a pair $(u_i,v_i)$ is said to be satisfied when there is a path from $u_i$ to $v_i$?
        \end{quote}
        We will show they are in fact closely related. We employ a similar definition, adapted to temporal graphs:
        \begin{quote}
            \textsc{TG-Steiner}\\
            \textbf{Input:}A temporal graph, $\mathcal{G}$, a weight function $w$ on the temporal edges of $\mathcal{G}$, a set $X = \{(u_1,v_1),\dotsc,(u_p,v_p)\}$ of $p$ node pairs and two positive integers $K \leq p$ and $B \leq w(\mathcal{G})$.\\
            \textbf{Question:} Is there a subgraph $\mathcal H$ of $\mathcal G$ of cost $w(\mathcal H) \leq K$ that satisfies at least $B$ node pairs from the set $X$?
        \end{quote}
        This definition generalizes TPCA: When $B=p$ and the weight assigned to the edges are 0 for edges initially present in the temporal graph and 1 for the edges proposed for augmentation, the problem becomes equivalent. 
        
        We now show that the \textsc{TG-Steiner} problem for temporal graphs reduces to \textsc{DG-Steiner}:
        \begin{lemma} \label{thm:TGDG}
            Any instance $I$ of the \textsc{TG-Steiner} problem can be transformed into an instance $I'$ of the \textsc{DG-Steiner} problem in polynomial time.
        \end{lemma}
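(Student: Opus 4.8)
The plan is to encode temporal reachability in a static directed graph via a time-expanded (layered) construction, and to arrange the weights so that selecting a temporal edge corresponds to selecting a single arc of the same weight. Concretely, from the temporal graph $\mathcal{G}=(G,\lambda)$ of lifespan $T$ I would create, for every vertex $v\in V$ and every layer $t\in\{0,1,\dots,T\}$, a copy $v_t$, and add a weight-$0$ \emph{waiting} arc $v_t\to v_{t+1}$ for each $v$ and each $t<T$. The waiting arcs let a journey sit at a vertex between two consecutive temporal edges, so that a demand can be realised by a path that moves forward only through layers.

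For each temporal edge $e=(\{u,v\},t)$ I would add a gadget consisting of two fresh vertices $m_e^{\mathrm{in}},m_e^{\mathrm{out}}$ and the weight-$0$ arcs $u_{t-1}\to m_e^{\mathrm{in}}$, $v_{t-1}\to m_e^{\mathrm{in}}$, $m_e^{\mathrm{out}}\to u_t$, $m_e^{\mathrm{out}}\to v_t$, together with a single \emph{gate} arc $m_e^{\mathrm{in}}\to m_e^{\mathrm{out}}$ carrying weight $w(e)$. Crossing the gadget advances from layer $t-1$ to layer $t$ and realises a use of $e$ at time $t$ in either direction, while paying $w(e)$ exactly once. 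I would keep the demand set $X$ but replace each pair $(u_i,v_i)$ by $(u_{i,0},v_{i,T})$, and carry the cost budget and the satisfaction threshold over unchanged. The construction has $O\bigl(n(T+1)\bigr)$ layer vertices and $O(1)$ vertices and arcs per temporal edge, hence is polynomial in the input size. Since every non-gate arc has weight $0$, an optimal subgraph may be assumed to contain all of them, so a solution is determined by the chosen gate arcs; this gives a weight-preserving bijection (up to the free arcs) between subsets of temporal edges of $\mathcal{G}$ and subgraphs of the constructed digraph.

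It then remains to check that, under this correspondence, demand $(u_i,v_i)$ is satisfied by a set of selected temporal edges in the strict setting if and only if $(u_{i,0},v_{i,T})$ is satisfied by the corresponding arc set. Forward: a strict journey using edges at times $t_0<t_1<\dots<t_{\ell-1}$ lifts to a path that waits from $u_{i,0}$ up to layer $t_0-1$, crosses the first gadget into layer $t_0$, waits up to layer $t_1-1$, and so on, finally waiting up to layer $T$; the strict inequalities $t_j<t_{j+1}$ are exactly what guarantees the intervening waiting arcs exist. Conversely, any $u_{i,0}\to v_{i,T}$ path decomposes into waiting segments and gadget crossings; a crossing entering from one endpoint and leaving at the other yields a temporal-edge use, crossings returning to the same endpoint are discarded as mere waiting, and because each crossing advances the layer from $t-1$ to $t$ while layers never decrease, the extracted edge times are strictly increasing, giving a journey in $\mathcal{G}$ on the selected edges.

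The step I expect to be the main obstacle — and the reason for the two-vertex gate rather than a naive pair of oppositely oriented arcs — is the cost accounting: a single undirected temporal edge may be traversed in opposite directions by the journeys of two different demands, so charging per directed arc would double-count its weight and break the equivalence. Funnelling both orientations through one paid gate arc makes each temporal edge count exactly once. I would take particular care to verify that the gadget introduces no unintended shortcuts, namely that the only ways to advance a layer are a waiting arc or a genuine gadget crossing, so that both strictness and reachability are preserved exactly and the two instances are genuinely equivalent.
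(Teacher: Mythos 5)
Your construction is essentially identical to the paper's: the same layered time-expansion with weight-$0$ waiting arcs, the same two-vertex gadget per temporal edge with a single paid gate arc so that each temporal edge is charged exactly once regardless of traversal direction, and the same replacement of each demand by its earliest-layer source and latest-layer sink. The argument is correct and matches the paper's proof, with your discussion of degenerate gadget crossings and strictness being a slightly more detailed version of the equivalence the paper states.
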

        \begin{proof}
            Let $I=(\mathcal{G}, w, X, K, B)$ be an instance of \textsc{TG-Steiner}. Temporal expansion is a classical technique to transform a temporal graph into a static one~\cite{michail2016introduction}. We build here a specific expansion that fits our augmentation problem. Let $G_T = (V_T,E_T)$ be (static directed) the graph such that:
            \begin{itemize}
                \item For each vertex $v$ of $\mathcal{G}$ we have $T+1$ copies $v^t$, $t=1,\dots,T+1$ in $V_T$. We also have $T$ arcs $(v^t,v^{t+1})$, $i=1,\dots,T$, of weight 0.
                \item For each temporal edge $(\{u,v\},t)$ of weight $w$ of $\mathcal{G}$ we add two vertices $(\{u,v\},t)$ and $(\{u,v\},t)'$, with an arc of weight $w$ from the former to the latter. We also add 4 arcs of weight 0: from $u^t$ and from $v^t$ to $(\{u,v\},t)$, and from $(\{u,v\},t)'$ to $u^{t+1}$ and to $v^{t+1}$.  
            \end{itemize}
The construction is illustrated in Figure \ref{fig:expnonstric}. For the set of pairs, if $(u_i,v_i) \in X$ then $(u_{i}^1,v_{i}^T) \in X'$, i.e. we replace the source by its earliest copy and the sink by its latest copy. We claim that finding a subgraph $H_T$ of $G_T$ of cost $w_T(H_T) \leq B$ for which at least $K$ node pairs from the set $X'$ are satisfied is equivalent to solving \textsc{TG-Steiner} on $I$.
            \begin{itemize}
                \itemif Suppose there exists such a subgraph $H_T$. Note that the only arcs that have a positive weight are those from $(\{u,v\},t)$ to $(\{u,v\},t)'$. We take in $\mathcal{G}$ all temporal edges $(\{u,v\},t)$ such that $((\{u,v\},t),(\{u,v\},t)')$ is in $H_T$.
                In this subgraph of $\mathcal{G}$, which has weight at most $K$, the connected pairs are the same as those connected in $H_T$, by construction of our temporal expansion - a path from $u_{i}^1$ to $b_{i}^T$ in $H_T$ corresponds to a journey from $u_i$ to $v_i$ in $\mathcal{G}'$. Hence, we are guaranteed to satisfy the same pairs in $H_T$ and $\mathcal{G}'$.
            
                \itemonlyif Suppose $I$ is a YES-instance, and let $\mathcal{H}$ be a solution. To construct the solution to the \textsc{DG-Steiner} corresponding instance, the previous process is reversed. We take in  $H_T$ all the arcs of weight 0, and arcs $((\{u,v\},t),(\{u,v\},t)')$ for which the  temporal edges $(\{u,v\},t)$ is selected in $\mathcal{H}$. For the same reasons as before, $H_T$ has the same weight as $\mathcal{H}$ and satisfies the same pairs.\qedhere
            \end{itemize}
        \end{proof}
        
        The previous lemma allows us to use the result of \cite{feldman2006directed}:
        \begin{theorem*}[Feldman, Ruhl 2006]
            \textsc{$p$-DG-Steiner} is polynomial time solvable when $p$ is a constant. There exists an algorithm solving \textsc{$p$-DG-Steiner} in  $O(mn^{4p-2}+n^{4p-1}\log n)$.
        \end{theorem*}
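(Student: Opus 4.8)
The plan is to reformulate \textsc{$p$-DG-Steiner} as a shortest-path computation in a polynomially-sized auxiliary ``configuration graph'', exploiting that $p$ is constant. First I would separate the ``satisfy at least $K$ of $p$ pairs'' requirement from the cost-minimization: since $p$ is fixed there are only $\binom{p}{K} = O(1)$ choices of which demand pairs to connect, so it suffices to solve, for each such subset, the core problem of connecting \emph{all} selected pairs $(u_i,v_i)$ at minimum cost (the Directed Steiner Network problem), and then return the cheapest feasible subset of size at least $K$. This only multiplies the running time by a constant, so the real work is a polynomial algorithm for the all-pairs version.

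For that version I would introduce Feldman and Ruhl's token game, which traces the construction of an optimal subgraph $H$ by a bounded set of tokens placed on vertices. A state is the tuple of current token positions (tokens start at the sources and sinks of the demands), and the legal moves are: (i) sliding a single token along an arc, paying that arc's weight; and (ii) merging two co-located tokens, or splitting a token in two, for free. The merge/split mechanic is exactly what models the sharing of arcs among several demand paths, so that a shared arc is charged only once. A play terminates once the tokens certify a path from each $u_i$ to its $v_i$. I would then establish the two-way correspondence: every play traces out a subgraph whose cost is at most the play's cost, and---the crux---every optimal $H$ is realized by some play of equal cost.

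The heart of the argument is the structural/decomposition lemma underlying the reverse direction: I would show that an optimal $H$ admits an ordering of its arcs together with a token schedule that realizes it while never paying for an arc twice. This requires arguing that the optimal subgraph has enough limited structure (its branching pattern and the way demand paths overlap are bounded in terms of $p$) that free merges and splits suffice to account for all arc reuse. Proving that this bounded structure always holds, and that it translates into a valid play with matching cost, is the main obstacle; it is precisely here that constant $p$ is essential, since it caps the number of tokens that are ever simultaneously active and hence keeps the state space manageable.

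Finally I would bound the configuration graph and run Dijkstra. Each reachable state is determined by the positions of $O(p)$ tokens, yielding $n^{O(p)}$ states, and the transitions---single-arc slides together with merge moves ranging over pairs of token positions---supply the remaining polynomial factors, so that a single shortest-path computation from the initial configuration to an accepting one gives the optimal Steiner network cost. Tracking these counts carefully produces the claimed $O(mn^{4p-2}+n^{4p-1}\log n)$ bound, and combining this with the constant-factor enumeration over demand subsets settles the theorem.
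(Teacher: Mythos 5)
This statement is not proved in the paper at all: it is quoted verbatim as an external result from Feldman and Ruhl's 2006 paper (cited as \cite{feldman2006directed}) and used as a black box, so there is no in-paper argument to compare yours against. Your outline does correctly identify the strategy of the actual Feldman--Ruhl proof: reduce the ``at least $K$ of $p$ pairs'' condition to a constant number of all-pairs Directed Steiner Network instances, model the construction of an optimal solution by a token game with free merge/split moves so that shared arcs are charged once, and solve the resulting problem as a single-source shortest path in a configuration graph of size $n^{O(p)}$.

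As a standalone proof, however, the proposal has a genuine gap, and you name it yourself: the claim that \emph{every} optimal subgraph $H$ is realized by some play of the token game of equal cost. This reverse direction is essentially the entire mathematical content of the theorem. It requires a nontrivial structural decomposition of optimal directed Steiner networks (in Feldman--Ruhl's treatment, an analysis of how the $p$ source-to-sink paths can branch, reconverge, and share arcs, together with a carefully chosen move set --- forward tokens from sources, backward tokens from sinks, and paired ``flying'' moves --- that can simulate any such pattern without paying for an arc twice). Your sketch asserts that ``free merges and splits suffice to account for all arc reuse'' but gives no argument, and without specifying the exact token set and move repertoire you also cannot derive the particular exponent $4p-2$: saying the state space is $n^{O(p)}$ and that Dijkstra finishes the job does not yield the stated bound $O(mn^{4p-2}+n^{4p-1}\log n)$. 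So the proposal is a faithful roadmap to the known proof, but the key lemma and the complexity accounting are both left unestablished.
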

        By noticing the temporal expansion of a temporal graph has $O(nT + M)$ vertices and $O(nT + M)$ arcs, where $M$ denotes the number of temporal edges of the graph, we get through the reduction an algorithm running in time $O((nT+M)^{4p-1}\log(nT+M))$. Hence:
        \begin{theorem}
            Strict TPCA is solvable in polynomial time when parameterized by $p$, the number of pairs to connect. 
        \end{theorem}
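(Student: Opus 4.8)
The plan is to obtain the result as a direct composition of three facts already assembled in this section: that Strict TPCA is a special case of \textsc{TG-Steiner}, that \textsc{TG-Steiner} reduces to \textsc{DG-Steiner} (Lemma~\ref{thm:TGDG}), and that $p$-\textsc{DG-Steiner} is polynomial for fixed $p$ (the Feldman--Ruhl theorem). First I would spell out the embedding of TPCA into \textsc{TG-Steiner} carefully. Given a Strict TPCA instance $(\mathcal{G}, F, X, K)$, I form a \textsc{TG-Steiner} instance whose temporal graph is $\mathcal{G}\uparrow F$, whose weight function assigns $0$ to every temporal edge already in $\mathcal{G}$ and $1$ to every edge of $F$, whose pair set is $X$, whose cost budget is $K$, and whose satisfaction threshold is $B=p$. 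A subgraph of weight at most $K$ that satisfies all $p$ pairs then corresponds exactly to choosing a subset $F'\subseteq F$ of size at most $K$ connecting every demanded pair, since the zero-weight edges may be kept for free and only the chosen augmentation edges contribute to the cost. This shows that solving \textsc{TG-Steiner} on this instance solves the original Strict TPCA instance.

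Next I would invoke Lemma~\ref{thm:TGDG} to transform this \textsc{TG-Steiner} instance into a \textsc{DG-Steiner} instance $I'$ in polynomial time, the number of pairs being preserved (each demand $(u_i,v_i)$ becomes $(u_i^1,v_i^T)$). The crucial point to double-check here is that the temporal expansion encodes precisely \emph{strict} journeys, as required by Strict TPCA: traversing the gadget of a temporal edge at time $t$ moves from layer $t$ to layer $t+1$, and the only other moves are the weight-$0$ waiting arcs $v^t\to v^{t+1}$ that advance layers, so two consecutively used temporal edges are forced to carry strictly increasing time labels. Hence a directed path in the expansion corresponds to a strict journey, and conversely. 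This is the step where I expect most of the care to be needed, since an off-by-one in the layer indexing would silently switch between the strict and non-strict regimes; I would verify both directions of this correspondence explicitly against the construction in Lemma~\ref{thm:TGDG}.

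Finally I would apply the Feldman--Ruhl algorithm to $I'$. Since $p$ is fixed, $p$-\textsc{DG-Steiner} is solvable in time $O(mn^{4p-2}+n^{4p-1}\log n)$ on an instance with $n$ vertices and $m$ arcs. It then remains only to bound the size of the expansion: it has $O(nT+M)$ vertices and $O(nT+M)$ arcs, where $n$ is the number of vertices of $\mathcal{G}$, $T$ its lifespan, and $M$ its number of temporal edges. Substituting these bounds into the Feldman--Ruhl running time collapses both terms and yields an algorithm for Strict TPCA running in time $O((nT+M)^{4p-1}\log(nT+M))$, which is polynomial for every fixed $p$. Chaining the polynomial-time reduction of the two previous paragraphs with this polynomial-time solver gives the claimed algorithm and completes the argument.
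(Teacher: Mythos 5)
Your proposal is correct and follows essentially the same route as the paper: cast Strict TPCA as a \textsc{TG-Steiner} instance with $0/1$ weights and $B=p$, apply the temporal-expansion reduction of Lemma~\ref{thm:TGDG} to \textsc{DG-Steiner}, invoke the Feldman--Ruhl algorithm, and bound the expansion size by $O(nT+M)$ to obtain the stated $O((nT+M)^{4p-1}\log(nT+M))$ running time. Your explicit check that the expansion's layer structure enforces strictly increasing time labels is a welcome clarification of a point the paper leaves implicit (and which is precisely why the non-strict case later needs the extra same-layer arcs).
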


    \subsection{Non-strict TPCA for a fixed number of pairs}
        Similarly to the strict case, the temporal expansion allows for the use of the result of \cite{feldman2006directed}. The key observation is their result works for any directed graph (not only on DAGs). 
        It is then possible to design a temporal expansion adapted for the non-strict case, taking into account the augmentation part. We achieve this by the following simple addition of arcs: for each pairs of adjacent temporal edges $(\{u,v\},t)$ and $(\{u,w\},t)$ (at the same time step), we add the arcs $((\{u,v\},t)',(\{u,w\},t))$ and $((\{u,w\},t)',(\{u,v\},t))$ in the expansion, see the gray arcs in figure \ref{fig:expnonstric}. On one hand, this allows paths to travel through multiple edges sharing an endpoint at the same time step. On the other hand, it is impossible to go back in time. As a result, the property that a path in the expansion corresponds to a path in the original graph is preserved. Therefore, we have:
        \begin{theorem}
            Non-strict TPCA is solvable in polynomial time when parameterized by $p$, the number of pairs to connect.
        \end{theorem}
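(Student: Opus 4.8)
The plan is to mirror the strict case exactly, changing only the temporal expansion and then reusing the reduction of Lemma~\ref{thm:TGDG} together with the Feldman--Ruhl algorithm. I would start from the expansion $G_T$ built in the proof of Lemma~\ref{thm:TGDG} and augment it with the arcs described in the statement: for every two temporal edges $(\{u,v\},t)$ and $(\{u,w\},t)$ of $\mathcal{G}$ occurring at the same time step $t$ and sharing the endpoint $u$, I add the two weight-$0$ arcs $((\{u,v\},t)',(\{u,w\},t))$ and $((\{u,w\},t)',(\{u,v\},t))$. Intuitively these arcs let a path that has just traversed an edge at time $t$ immediately enter another time-$t$ edge incident to the common endpoint, which is precisely what distinguishes a non-strict journey from a strict one. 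The pair set $X'$ is defined as before, replacing each source $u_i$ by its earliest copy $u_i^1$ and each sink $v_i$ by its latest copy $v_i^T$.

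The core of the argument is the correspondence between paths in this modified expansion and non-strict journeys in $\mathcal{G}$. For the easy direction, I would show that any non-strict journey $u_i\to v_i$ is simulated by a path from $u_i^1$ to $v_i^T$: each edge used at time $t$ is traversed through its gadget $(\cdot,t)\to(\cdot,t)'$, a wait is realized by an arc $v^t\to v^{t+1}$, and a maximal run of edges used at the same time step is realized by the new weight-$0$ arcs. For the converse, I would track the time index along any path in the expansion and observe that every arc either keeps the time step constant (the gadget arcs and the new arcs) or increases it by one (the waiting arcs), so the time index is non-decreasing along any path. This is the key point: although the new arcs may create cycles inside a single time layer, they never allow a path to return to an earlier time, so any path projects onto a well-defined non-strict journey. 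Because only the gadget arcs $(\cdot,t)\to(\cdot,t)'$ carry positive weight, the weight of a path equals the total weight of the temporal edges it uses, exactly as in the strict expansion, so the cost correspondence of Lemma~\ref{thm:TGDG} is preserved verbatim.

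Having established this correspondence, the reduction from \textsc{TG-Steiner} to \textsc{DG-Steiner} goes through unchanged: a subgraph of the modified expansion of cost at most $B$ satisfying at least $K$ pairs of $X'$ corresponds to a subgraph of $\mathcal{G}$ of the same cost satisfying the same pairs of $X$, and conversely. The one conceptual obstacle is that the modified expansion is no longer a DAG, since the new arcs can form cycles within a time step; this is exactly where I rely on the fact, emphasised in the statement, that the Feldman--Ruhl algorithm for \textsc{$p$-DG-Steiner} is valid for arbitrary directed graphs and not only acyclic ones. Finally, since the number of added arcs is at most quadratic in the number $M$ of temporal edges, the expansion still has $O(nT+M)$ vertices and a polynomial number of arcs, so embedding TPCA into the chain TPCA $\to$ \textsc{TG-Steiner} $\to$ \textsc{DG-Steiner} and invoking Feldman--Ruhl yields, for every fixed $p$, a running time polynomial in the input, as in the strict case.
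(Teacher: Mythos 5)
Your proposal is correct and follows essentially the same route as the paper: the same added weight-$0$ arcs between gadgets of adjacent same-time temporal edges, the same observation that time is non-decreasing along any path so the expansion still projects onto (non-strict) journeys, and the same reliance on the fact that the Feldman--Ruhl algorithm handles arbitrary directed graphs, not just DAGs. Your write-up is in fact somewhat more detailed than the paper's own brief argument, particularly in spelling out the path-to-journey correspondence and the size bound on the modified expansion.
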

    \begin{figure}[!ht]
    \centering
    \begin{tikzpicture}[node distance={25mm}, thick, main/.style = {draw, circle, minimum size=0.8cm}]
        \node[main] (1) {$u_1$};
        \node[main] (2) [right of=1] {$u_2$};
        \node[main] (3) [below of=2] {$u_3$};
        \node[main] (4) [left of=3] {$u_4$};
        
        \draw[] (1) -- node[pos=0.75, above, text width = 20mm] {$t=\{1,2\}$\ $w(1)=0$\ $w(2)=0$} (2);
        \draw[] (2) -- node[midway, right, text width = 20mm] {$t=\{1\}$\ $w(1)=1$} (3);
        \draw[] (3) -- node[pos=0.25, below, text width = 20mm] {$t=\{2\}$\ $w(2)=0$} (4);
        
        \node[main] (11) [above right = 6mm and 20mm of 2] {$u^1_1$};
        \node[main] (21) [right of = 11] {$u^1_2$};
        \node[main] (31) [right of = 21] {$u^1_3$};
        \node[main] (41) [right of = 31] {$u^1_4$};
        \node[main] (12) [below of = 11] {$u^2_1$};
        \node[main] (22) [right of = 12] {$u^2_2$};
        \node[main] (32) [right of = 22] {$u^2_3$};
        \node[main] (42) [right of = 32] {$u^2_4$};
        \node[main] (13) [below of = 12] {$u^3_1$};
        \node[main] (23) [right of = 13] {$u^3_2$};
        \node[main] (33) [right of = 23] {$u^3_3$};
        \node[main] (43) [right of = 33] {$u^3_4$};
        \node[] (equiv) [above left = 3mm and 4.5mm of 12] {\scalebox{2}{$\Leftrightarrow$}};
        
        \node[draw, circle] (i1121) [below right = 0.35cm and 8.1mm of 11] {};
        \node[draw, circle] (o1222) [above right = 0.35cm and 8.1mm of 12] {};

        \node[draw, circle] (i2131) [below right = 0.35cm and 8.1mm of 21] {};
        \node[draw, circle] (o2232) [above right = 0.35cm and 8.1mm of 22] {};
        
        \node[draw, circle] (i1222) [below right = 0.35cm and 8.1mm of 12] {};
        \node[draw, circle] (o1323) [above right = 0.35cm and 8.1mm of 13] {};
        
        \node[draw, circle] (i3242) [below right = 0.35cm and 8.1mm of 32] {};
        \node[draw, circle] (o3343) [above right = 0.35cm and 8.1mm of 33] {};

        \draw[->] (11) -- (i1121);
        \draw[->] (i1121) -- node[midway, left] {0} (o1222);
        \draw[->] (o1222) -- (22);
        \draw[->] (21) -- (i1121);
        \draw[->] (o1222) -- (12);
        
        \draw[->] (21) -- (i2131);
        \draw[->] (i2131) -- node[midway, right] {1} (o2232);
        \draw[->] (o2232) -- (32);
        \draw[->] (31) -- (i2131);
        \draw[->] (o2232) -- (22);
        
        \draw[->] (12) -- (i1222);
        \draw[->] (i1222) -- node[midway, left] {0} (o1323);
        \draw[->] (o1323) -- (23);
        \draw[->] (22) -- (i1222);
        \draw[->] (o1323) -- (13);
        
        \draw[->] (32) -- (i3242);
        \draw[->] (i3242) -- node[midway, right] {0} (o3343);
        \draw[->] (o3343) -- (43);
        \draw[->] (42) -- (i3242);
        \draw[->] (o3343) -- (33);
        
        \draw[->, gray] (o2232) -- (i1121);
        \draw[->, gray] (o1222) -- (i2131);
        
        \draw[->] (11) -- (12); \draw[->] (12) -- (13);
        \draw[->] (21) -- (22); \draw[->] (22) -- (23);
        \draw[->] (31) -- (32); \draw[->] (32) -- (33);
        \draw[->] (41) -- (42); \draw[->] (42) -- (43);
    \end{tikzpicture}
    \caption{A TG-Steiner instance and its equivalent DG-Steiner expansion. The backward arcs in gray correspond to the non-strict version and the unlabelled arcs have weight 0.}
    \label{fig:expnonstric}
\end{figure}
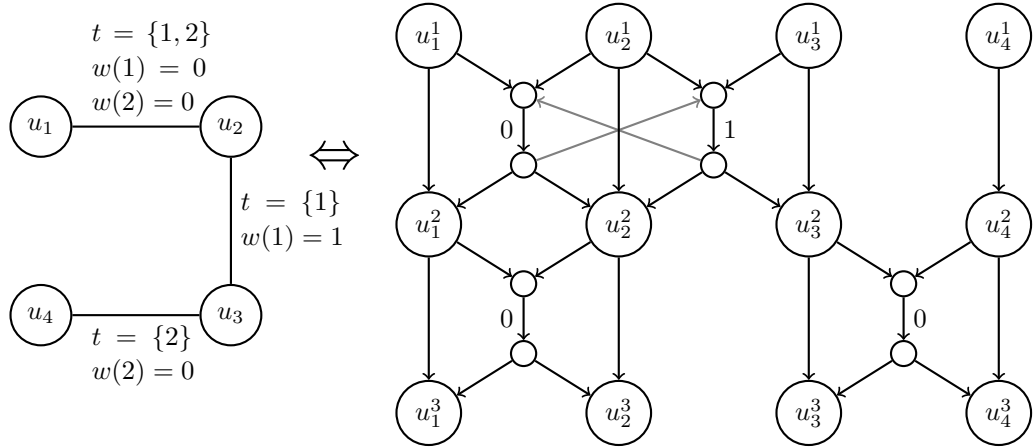
    
    \subsection{Edge-by-edge TPCA}

        Finally, we study a variation of the problem that we call \textit{edge-by-edge} augmentation. Here, we are given a non-connected temporal graph and a set of possible temporal edges that we may add, but the difference is that if several of these edges have the same endpoints (like $(\{u,v\},t)$ and $(\{u,v\},t')$), we can add all of them at once with a cost of only one.
        
        For example this is what happens if we want to strengthen a transportation network and decide to build a new subway line. By building one line (edge), we add many trains (temporal edges) between its two endpoints. 
        
        Since it was the only polynomial case, we first study the TPCA problem under this new setting. We prove that it becomes NP-complete, thus underlying an interesting difference between the two different cost evaluations of an augmentation.
            \begin{theorem}
                \textit{Non-strict edge-by-edge TPCA} is NP-complete even for $p = 2$ and for $T = 2$.
            \end{theorem}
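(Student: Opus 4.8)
The plan is to first record that the problem is in NP: a certificate is the set $F'$ of purchased underlying edges (equivalently, the chosen static edges, each activated at all the time steps at which it is offered), whose size is at most $K$; given $F'$ one checks in polynomial time, using Corollary~\ref{cor:t2}, that each demanded pair $(u_i,v_i)$ admits a journey in $\mathcal G\uparrow F'$. The substance is the hardness, and the natural source is a covering problem such as \textsc{Set Cover} (or, equivalently for this purpose, \textsc{Hitting Set}).

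The feature I would exploit is exactly the one that separates edge-by-edge augmentation from ordinary TPCA, which is polynomial for fixed $p$ through the temporal-expansion reduction to \textsc{DG-Steiner}: here buying one underlying edge $\{u,v\}$ makes it available at \emph{both} time steps for a single unit of cost. With $T=2$ and non-strict journeys, Corollary~\ref{cor:t2} says that a pair $(u_i,v_i)$ is satisfied precisely when the time-$1$ component of $u_i$ meets the time-$2$ component of $v_i$. I would therefore give each object of the source instance (each set $S_j$) a single candidate edge $f_j$ offered simultaneously at time $1$ and at time $2$, and arrange the graph so that this one purchase plays a \emph{double role}: it advances the first pair's journey inside the time-$1$ snapshot and the second pair's journey inside the time-$2$ snapshot. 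Minimising the number of purchased edges would then amount to choosing a minimum family of objects that simultaneously serves what both pairs need, i.e.\ a minimum cover. This double role across the two pairs is unavailable to a single pair and to the cost model of ordinary TPCA, which is exactly why both demands are needed.

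Concretely, I would build for the first pair a ``corridor'' $u_1=a_0,a_1,\dots,a_n$ living in the time-$1$ snapshot whose consecutive links can only be completed by purchased edges, so that $u_1$ reaches $a_n$ (and thereby meets the fixed time-$2$ component of $v_1$) is \emph{intended} to hold iff every element is covered; symmetrically I would place for the second pair an analogous corridor in the time-$2$ snapshot, and identify the purchasable edges of the two corridors so that a single $f_j$ supplies the corridor links associated with the elements of $S_j$ in both layers at once. The forward direction (cover $\Rightarrow$ augmentation) is then routine: purchase the edges of a cover of size $K$ and verify the two intersection conditions of Corollary~\ref{cor:t2}. The converse calls for an exchange argument, in the spirit of the proofs of Theorems~\ref{thm:2-STCA} and~\ref{thm:2-TSA}, showing that any connecting set can be rewritten, without increasing its size, into one using only the canonical edges $f_j$, after which the chosen $f_j$'s form a cover.

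The step I expect to be the main obstacle is controlling \emph{spurious} connectivity, and it is here that the availability of two pairs must really be leveraged. Because non-strict reachability at $T=2$ is governed by whole snapshot \emph{components}, a purchased edge merges components globally, so an edge whose set contains two non-adjacent corridor elements can inadvertently bridge the corridor between non-consecutive positions and thereby ``skip'' an uncovered one; left unchecked this is precisely what collapses the single-corridor idea into something strictly easier (and is why a single pair is tractable). Designing the corridors so that each $f_j$ contributes exactly its intended links and no bypassing shortcut arises -- for instance by giving each element-occurrence its own private attachment vertices, keeping the corridor order from collapsing, and using the second pair and the second time layer to pin the structure in place -- is the delicate part, together with verifying that the doubling across layers creates no unintended cross-layer component. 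Once this is settled, combining it with Corollary~\ref{cor:t2} and the exchange argument yields NP-completeness already for $p=2$ and $T=2$.
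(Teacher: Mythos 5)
You have correctly isolated the mechanism that makes edge-by-edge TPCA hard already at $p=2$, $T=2$: a purchased edge offered at both time steps serves the first pair's journey inside the time-$1$ layer and the second pair's journey inside the time-$2$ layer simultaneously, and the two demands are used to verify two different global conditions. This is exactly the engine of the paper's proof. But your concrete plan has a gap that is not merely ``delicate gadget design''. You want one candidate edge $f_j$ per set $S_j$ to supply \emph{all} the corridor links associated with the elements of $S_j$, so that the number of purchased edges equals the number of chosen sets. In the non-strict $T=2$ setting, however, a purchased edge performs exactly one merge of two snapshot components per layer; if the corridor $a_0,\dots,a_n$ starts as $n+1$ distinct time-$1$ components, then connecting $a_0$ to $a_n$ requires at least $n$ purchases regardless of where the endpoints of the $f_j$ are placed, since each added edge decreases the number of components by at most one. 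Hence a set covering $k$ elements cannot be ``selected'' for unit cost: selecting it costs on the order of $k$ edges, the objective becomes the total size of the chosen sets rather than their number, and the correspondence with minimum \textsc{Set Cover} breaks. The edge-by-edge feature buys you reuse of one edge \emph{across time}, not reuse of one purchase across several distinct vertex pairs.

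The paper sidesteps this by reducing from \textsc{3-SAT} with a \emph{tight} budget of $3m$, one optional link per literal occurrence, each present at times $\{1,2\}$: the first pair forces the time-$1$ path to buy every optional link of exactly one branch per variable gadget (encoding a truth assignment), and the second pair's time-$2$ path through the clause gadgets is feasible iff every clause contains a literal whose branch was bought. There the forced choice is \emph{which} bundle of links to buy, with all candidate bundles having equal cost, rather than \emph{how many} bundles to buy. If you want to salvage a covering-style reduction you would have to make the budget count element occurrences rather than sets, which essentially reconstructs the 3-SAT gadgets. Finally, you have not actually exhibited the corridors; the anti-bypassing issue you flag is real but secondary to the cost-accounting problem above.
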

            \begin{proof}
                The problem is trivially in NP. We make a reduction from 3-SAT.\\
                Let $x_1, \dotsc, x_n$ be the variables and $C_1, \dotsc, C_m$ the clauses of the statement $\phi$. We construct our transformation graph in two parts. \\
                In the variable part of the graph, we concatenate gadgets that encode the truth value of the variable, and all the edges are at time 1. A gadget has a starting vertex, connected to two other vertices, that we call the true and false buffer vertices. In each branch, we alternate between value vertices and buffer vertices. There is a value vertex in each branch for each clause where the variable appears. The value vertex is connected to the last buffer vertex by an optional link, present at times $\{1,2\}$. The two value vertices are connected to the two following buffer vertices when they exist. The last two value vertices of each branch are connected to the ending vertex of the gadget. The concatenation is done by connecting the ending vertex to the starting vertex of the next variable.\\
                The clause part of the graph has all of its edges at time 2. Like the variable part, the clauses are encoded by a concatenation of gadgets, composed of a starting vertex and an ending vertex. There is an edge for each literal in the clause, from the starting vertex to the buffer preceding the value vertex corresponding to the clause and in the branch corresponding to the literal. There is also an edge from the corresponding value vertex and the ending vertex in the clause part. An example of variable and clause gadgets is shown in figure \ref{fig:gadgets}.\\
                We want to connect $x_1^S$ to $x_n^E$ and $C_1^S$ to $C_m^E$. Remark that a path from $x_1^S$ to $x_n^E$ only takes edges at time 1 and is always of cost $3m$. This is because the only edges with cost 1 are those added for each literal in each clause and there are 3 literals per clause. Moreover, for a valid solution of SAT and in a variable gadget, only the edges of one branch are added, implying that exactly $3m$ edges have been added. They connect $x_1^S$ to $x_n^E$ and $C_1^S$ to $C_m^E$. Conversely, if there are no solutions with cost $3m$ then the formula cannot be satisfied.
                \begin{figure}[!ht]
    \centering
    \scalebox{.7}{
    \begin{tikzpicture}[node distance={25mm}, thick, main/.style = {draw, circle, minimum size=0.8cm}]
        \node[main] (v_start) {$x_i^S$};
        \node[main] (t_b1) [above right of=v_start] {$\top_b^{C_j}$};
        \node[main] (f_b1) [below right of=v_start] {$\bot_b^{C_j}$};
        \node[main] (t_v1) [right of=t_b1] {$\top_v^{C_j}$};
        \node[main] (f_v1) [right of=f_b1] {$\bot_v^{C_j}$};
        \node[main] (t_b2) [right of=t_v1] {$\top_b^{C_k}$};
        \node[main] (f_b2) [right of=f_v1] {$\bot_b^{C_k}$};
        \node[main] (t_v2) [right of=t_b2] {$\top_v^{C_k}$};
        \node[main] (f_v2) [right of=f_b2] {$\bot_v^{C_k}$};
        \node[main] (v_end) [below right of=t_v2] {$x_i^E$};
    
        \draw[] (v_start) -- node[midway, above left] {1} (t_b1);
        \draw[] (v_start) -- node[midway, below left] {1} (f_b1);
        \draw[] (t_v1) -- node[midway, above] {1} (t_b2);
        \draw[] (f_v1) -- node[midway, below] {1} (f_b2);
        \draw[] (t_v2) -- node[midway, above right] {1} (v_end);
        \draw[] (f_v2) -- node[midway, below right] {1} (v_end);
    
        \node[main] (c1_start) [below = 30mm of v_start] {$C_j^S$};
        \node[main] (c1_end) [right of= c1_start] {$C_j^E$};
        \node[main] (c2_start) [right = 30mm of c1_start] {$C_k^S$};
        \node[main] (c2_end) [right of= c2_start] {$C_k^E$};

        \draw[] (c1_end) -- node[midway, above] {2} (c2_start);
        \draw[] (c1_start) -- node[pos=0.25, above left] {2} (t_b1);
        \draw[] (c1_end) -- node[pos=0.75, right] {2} (t_v1);
        \draw[] (c2_start) -- node[pos=0.25, above left] {2} (f_b2);
        \draw[] (c2_end) -- node[pos=0.25,below right] {2} (f_v2);
    
        \draw[dotted] (t_b1) -- node[midway, above] {1,2} (t_v1);
        \draw[dotted] (t_b2) -- node[midway, above] {1,2} (t_v2);
        \draw[dotted] (f_b1) -- node[pos=0.25, above] {1,2} (f_v1);
        \draw[dotted] (f_b2) -- node[midway, above] {1,2} (f_v2);
        
    \end{tikzpicture}
    }
    \caption{Variable and clause gadgets}
    \label{fig:gadgets}
\end{figure}
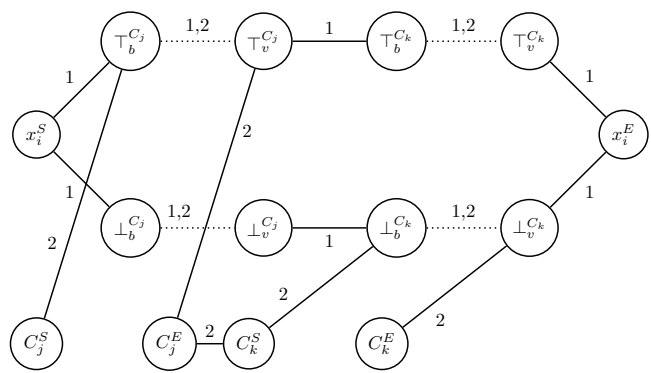
            \end{proof}
            The construction can be adapted for the strict case, however we lose the bound on $T$. Roughly speaking, the path in the variable part is changed from being at time 1 to being a strictly increasing one ending at $\alpha$. Then in the clause part, the path at time 2 is changed to a strictly increasing path starting at $\alpha + 1$. This way we preserve the two distinct paths and the proof still holds:
            \begin{theorem}
                \textit{Strict edge-by-edge TPCA} is NP-complete even for $p = 2$.
            \end{theorem}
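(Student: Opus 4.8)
The plan is to reuse the $3$-SAT reduction of the non-strict case \emph{verbatim} at the level of the underlying static graph and of the augmentation set $F$, and to change only the time labels so that the two monotone journeys forced by the two demands become strictly increasing. Membership in NP is immediate: a certificate is the chosen set $F'\subseteq F$, and checking both $|F'|\le K$ and that strict journeys from $x_1^S$ to $x_n^E$ and from $C_1^S$ to $C_m^E$ exist in $\mathcal{G}\uparrow F'$ is polynomial (strict reachability is computable by a standard time-respecting BFS). The two demands keep $p=2$, so it only remains to re-engineer the time labels of the gadgets of Figure~\ref{fig:gadgets}.

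First I would relabel the variable part. In the non-strict construction every variable-part edge carries time $1$; instead I assign strictly increasing times along the branch-independent skeleton of a variable gadget, using position-based labels, so that the $k$-th edge of the true branch and the $k$-th edge of the false branch receive the same time. Because both branches have the same length (one value vertex per occurrence of the variable in each branch), every choice of branch then yields a strictly increasing journey, and concatenating the gadgets gives a single strictly increasing journey from $x_1^S$ to $x_n^E$ ending at some time $\alpha$. Each optional link keeps its variable-range time (its position time), so it can still be used by the variable journey, but I additionally give it a second label strictly larger than $\alpha$, so that the same edge-by-edge link can later be traversed by the clause journey.

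Next I would relabel the clause part, which carried time $2$ uniformly, by a strictly increasing labeling starting at $\alpha+1$ along the clause skeleton $C_1^S\to\dots\to C_m^E$. For each clause gadget the three edges forming the local route from $C_j^S$ through a buffer and a value vertex to $C_j^E$ receive three consecutive times, and the value vertex's optional link is given a clause-range time strictly between the literal edge's time and the value-to-end edge's time, so that a clause is satisfiable by a journey exactly when one of its three optional links has been added. With these labels the correctness argument of the non-strict case carries over essentially line by line: a satisfying assignment lets us add the $3m$ optional links of the selected branches, producing both strict journeys; conversely any solution of cost at most $3m$ must add a full branch in each variable gadget to connect $x_1^S$ to $x_n^E$ (hence exactly one branch per gadget, a consistent assignment, since $\sum_i d_i=3m$), while existence of the clause journey forces every clause to contain an added optional link, i.e.\ to be satisfied.

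The main obstacle is to guarantee that the relabeling introduces no unintended strict journey that could cheat the cost argument. I would ensure this by keeping the variable-range $\{1,\dots,\alpha\}$ and the clause-range $\{\alpha+1,\dots\}$ disjoint, with the optional links being the only edges carrying one time in each range; a strict journey can then pass from the variable regime to the clause regime only by crossing a single optional link, which is exactly the intended behaviour and cannot be exploited to shorten the variable journey or to connect clause vertices without paying for an optional link. The one feature irrevocably lost compared with the non-strict statement is the bound $T=2$: enforcing strictness spreads the labels over $\Theta(\alpha)$ time steps, which is why the theorem is stated without any restriction on $T$.
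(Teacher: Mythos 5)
Your proposal is correct and follows essentially the same route as the paper's (much terser) argument: relabel the variable part with strictly increasing times ending at some $\alpha$, relabel the clause part with strictly increasing times starting at $\alpha+1$, and give each optional link one label in each range so that a single edge-by-edge purchase still serves both journeys, at the cost of losing the bound on $T$. The paper presents this only as a two-sentence sketch, so your version merely makes explicit the details (position-based labels, the two labels on optional links, and the no-unintended-journey check) that the paper leaves implicit.
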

            For 2-TCA in the strict setting, we made the observation that only the edges in the augmentation set are interesting to add to the graph. Considering that adding an edge at time 2 is dominated by adding the same edge at time 1, we get that even if we were allowed to add edge-by-edge, the same observation holds. Thus, we get:
            \begin{observation}
                Strict EBE 2-TCA is NP-complete.
            \end{observation}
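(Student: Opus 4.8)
The plan is to reuse, essentially verbatim, the reduction from \textsc{Dominating Set} constructed in the proof of Theorem~\ref{thm:2-STCA}, and to argue that switching to the edge-by-edge cost model does not change the optimal value on the instances it produces. Membership in NP is immediate, so only the hardness needs attention, and it will follow once we show that a dominating set of size at most $K$ exists if and only if $x$ can be made a source by buying at most $K$ lines.

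First I would recall the relevant structure of that instance: the only original edge incident to $x$ is $(\{x,y\},2)$, the subgraph induced by $V\cup\{y\}$ is complete, the time-$2$ edges inside $V$ are exactly the edges of $E$, and $y$ has no time-$2$ edge other than the one to $x$. Since the lifespan is $2$ and journeys are strict, every journey leaving $x$ has length at most two and is necessarily of the form $x\xrightarrow{1} u\xrightarrow{2} v$. Hence making $x$ a source amounts to reaching at time $1$ a set $U\subseteq V$ whose time-$2$ neighbourhoods (the edges of $E$) together with $U$ itself cover $V$, i.e. to selecting a dominating set.

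The core step is to re-run the domination argument of Theorem~\ref{thm:2-STCA} in the edge-by-edge setting. Buying a line $\{u,v\}$ now adds both $(\{u,v\},1)$ and $(\{u,v\},2)$ for a single unit of cost, so I must verify that the extra time-$2$ temporal edge, obtained for free, never creates a journey from $x$ that the original analysis did not already handle. This is exactly where the length-two restriction of strict journeys of lifespan $2$ is decisive. The second copy $(\{x,u\},2)$ of an $x$-incident line only lets $x$ reach $u$ at time $2$, which is dominated by reaching $u$ at time $1$ via $(\{x,u\},1)$; no journey can be extended past a time-$2$ arrival. For a line between two non-$x$ vertices, $(\{u,v\},1)$ is useless (by strictness it can neither be the first step of a journey out of $x$ nor extend one), while $(\{u,v\},2)$ is, just as before, replaceable by the $x$-incident line $\{x,v\}$ at the same unit cost. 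The line $\{x,y\}$ remains useless. Thus buying a line incident to $x$ at time $1$ is dominant.

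I would conclude by assembling these replacements: any edge-by-edge connecting set of cost at most $K$ can be rewritten, without increasing the number of lines, into one that uses only lines $\{x,u\}$ with $u\in V$, and the corresponding set $U$ is a dominating set of size at most $K$; conversely, a dominating set $U$ of size at most $K$ yields the lines $\{x,u\}$ for $u\in U$, which make $x$ a source at cost $|U|\le K$. The main obstacle is precisely the verification in the core step that the free time-$2$ edges grant no new reachability; once the length-two structure of strict journeys is invoked, this collapses to the same case analysis already carried out for Theorem~\ref{thm:2-STCA}.
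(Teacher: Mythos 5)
Your proposal is correct and follows essentially the same route as the paper: it reuses the \textsc{Dominating Set} reduction of Theorem~\ref{thm:2-STCA} and observes that the time-$2$ copies obtained for free under the edge-by-edge cost model are dominated by the corresponding time-$1$ edges, so the optimal value is unchanged. You simply spell out the replacement argument in more detail than the paper's one-line justification.
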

            As for the non-strict setting, we already establish that the proof is in the unrestricted case by construction. Edge-by-edge would then correspond to allowing doing a column and line merge as one operation, but the optimal solution only does column merges so we would not gain any advantage. It follows then that:
            \begin{observation}
                Non-strict EBE 2-TCA is NP-complete. 
            \end{observation}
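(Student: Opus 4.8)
The plan is to reuse wholesale the machinery built for Non-strict 2-TCA: the reduction from \textsc{Disjoint Set Covers} (DSC) through the matrix problem OCTO (Lemma~\ref{lemmaOTO}) together with the realization of a binary matrix as a component-intersection matrix (Lemma~\ref{lemmaCC}). I would start from the same DSC instance $(T,C,K)$ and the same matrix $B$ with $n(m+1)$ rows (the $m+1$ stacked copies of the $n\times m$ incidence matrix of the sets) and $m$ columns, and build via Lemma~\ref{lemmaCC} the simple lifespan-$2$ temporal graph $\mathcal G$ whose component-intersection matrix is $B$, working in the unrestricted setting. The only thing that changes is the cost model: adding a static edge $\{u,v\}$ now inserts both $(\{u,v\},1)$ and $(\{u,v\},2)$ for a cost of one, which by Corollary~\ref{cor:t2} and the translation preceding OCTO means performing, in one operation, an OR-merge of the two rows hosting $u,v$ at time $1$ \emph{and} an OR-merge of the two columns hosting them at time $2$. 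Thus the goal is to show that DSC admits a partition into at least $K$ covers iff $B$ can be made one-filled using at most $m-K$ of these coupled row-and-column merges.

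For the forward direction I would simply lift the column-only solution of the non-edge-by-edge case. If $C$ splits into $k\ge K$ disjoint covers, group the columns accordingly and realize a group of size $s$ by $s-1$ coupled merges, for a total of $m-k\le m-K$ operations. The row-merges bundled with them are harmless: OR-combinations are monotone and never destroy a $1$, and because each super-column is a cover, every single row already sees a $1$ in it, so the matrix is one-filled regardless of how the rows are contracted. This direction is essentially free.

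The substance lies in the reverse direction, which I expect to be the main obstacle. Given at most $m-K$ coupled merges turning $B$ into a one-filled matrix, I would first observe that since each operation contracts at most one pair of columns, at least $m-(m-K)=K$ super-columns survive, and as column-merges partition the original columns these are pairwise disjoint sub-collections of $C$. It then suffices to show every super-column is a cover of $T$. Suppose some super-column $J$ misses an element $e_a$, i.e. $e_a\notin T_j$ for all $j\in J$. Since the matrix is one-filled, every super-row must meet $J$ in a $1$; hence each of the $m+1$ copies of the row of $e_a$ lies in a super-row that also contains a copy of some \emph{other} element covered by $J$. A short union-find/forest count then applies: if the super-rows hosting the $m+1$ copies of $e_a$ are $I_1,\dots,I_g$ (disjoint, each of size $\ge 2$), they contain the $m+1$ copies plus at least $g$ further rows, so building them costs $\sum_t(|I_t|-1)\ge (m+1+g)-g=m+1$ row-merges. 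But at most $m-K<m+1$ coupled merges were used, hence at most $m-K$ row-merges were available, a contradiction.

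Consequently no super-column can miss an element, so each of the $\ge K$ disjoint super-columns is a cover, yielding a DSC partition of size at least $K$ and closing the equivalence; NP-membership is immediate by guessing the added edges. The crucial and delicate point is exactly the lower bound $m+1$ on the number of row-merges needed to ``lose'' an element: this is precisely what the $m+1$ stacked copies were designed to enforce, and it is what formalizes the informal claim that bundling a free row-merge with every column-merge confers no advantage.
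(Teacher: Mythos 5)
Your proposal is correct and follows essentially the same route as the paper: reuse the DSC\,$\to$\,OCTO reduction, interpret an edge-by-edge addition as a coupled row-and-column OR-merge of unit cost, and argue that the bundled row merges confer no advantage because the $m+1$ stacked copies force any ``useful'' row contraction to cost at least $m+1>m-K$ merges. The paper dispatches this observation with a one-line remark (``the optimal solution only does column merges''), whereas you supply the explicit counting argument ($\sum_t(|I_t|-1)\geq m+1$) that makes that remark rigorous; this is a faithful elaboration rather than a different proof.
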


\section{Conclusion}
    In this paper, we have introduced and analyzed the Temporal Connectivity Augmentation (TCA) problem and several variants. Our results demonstrate that TCA, in both the strict and non-strict settings, is NP-complete under various constraints, including bounded lifespan and simple graphs.
    
    In addition to these hardness results, we presented polynomial-time algorithms for special cases, such as (1+1)-TCA and TPCA parameterized by the number of pairs. By adapting temporal expansion, we have extended existing techniques to the augmentation setting, and possibly for some weighted temporal problems.
    
    Several promising research directions emerge from our work. A natural extension would involve studying the impact of further structural graph parameters such as those in the work of Enright et al. \cite{enright2024structural}. Additionally, exploring the problem with restrictions on the underlying graph is also an interesting perspective.


\bibliography{refs}

\end{document}